\theoremstyle{plain}% Theorem-like structures provided by amsthm.sty
\newtheorem{theorem}{Theorem}[section]
\theoremstyle{definition}
\theoremstyle{remark}
\newcommand*\GRU{\textsc{GRU}}
\newcommand*\RNN{\textsc{RNN}}
\newcommand*\bm[1]{\mathbf{#1}}
\newcommand*\Let[2]{\State #1 $\gets$ #2}
\algrenewcommand\alglinenumber[1]{
    {\sf\footnotesize\addfontfeatures{Colour=888888,Numbers=Monospaced}#1}}
\algrenewcommand\algorithmicrequire{\textbf{input}}
\newcommand*\colrule{\\[-7.5pt]\hline\\[-5pt]}
\newcommand*\botrule{\\[-7.7pt]\hline}
\title{Efficient Pricing and Hedging of High Dimensional American Options using Deep Recurrent Networks}
\author{{Andrew S.~Na}\thanks{CONTACT Andrew S. Na. Email: andrew.na@uwaterloo.ca} \\
	David R. Cheriton School of Computer Science\\
	University of Waterloo\\
	Waterloo, ON\\
	\texttt{andrew.na@uwaterloo.ca} \\
	%% examples of more authors
	\And
	{Justin W.L.~Wan} \\
	David R. Cheriton School of Computer Science\\
	University of Waterloo\\
	Waterloo, ON\\
	\texttt{justin.wan@uwaterloo.ca}
}
\begin{document}
\maketitle

\begin{abstract}
We propose a deep Recurrent neural network (RNN) framework for computing prices and deltas of American options in high dimensions. Our proposed framework uses two deep RNNs, where one network learns the price and the other learns the delta of the option for each timestep. Our proposed framework yields prices and deltas for the entire spacetime, not only at a given point (e.g. t = 0). The computational cost of the proposed approach is linear in time, which improves on the quadratic time seen for feedforward networks that price American options. The computational memory cost of our method is constant in memory, which is an improvement over the linear memory costs seen in feedforward networks. Our numerical simulations demonstrate these contributions, and show that the proposed deep RNN framework is computationally more efficient than traditional feedforward neural network frameworks in time and memory.
\end{abstract}

\keywords{
American Option Pricing, Deep Recurrent Neural Networks, High Dimensional Option Pricing, Delta Hedging, Stochastic Differential Equations}

\section{Introduction}

Neural network solutions to the American option problem are some of the most studied topics in financial applications of neural networks. This is because, the capabilities of neural networks to overcome the curse of dimensionality is attractive to practitioners who price American options products. American options are one of the most commonly traded option styles in the  options market and the pricing and hedging of American options presents interesting academic challenges. However, many papers that look at neural networks in option pricing do not account for the extended training time required to train a network for each timestep to achieve good price estimates. Financial institutions that quote and sell American options need to be able to obtain the price in a timely manner. This becomes more important when the trader requires the derivative of the option price with respect to the underlying, called the delta of quoted option in order to create a price that is neither too different from competitors and hedge their positions on the option \cite{hull-2003}. This is often done in real time and with limited computing resources. This is what motivated our research into recurrent neural network solutions to the American options problem.

There are numerous classical approaches to solve the American option problem such as the binomial tree method \cite{hull-2003}, numerical solutions to partial differential equations (PDEs) with free boundaries or penalty terms \cite{forsyth-vetzal-2002}, \cite{pironneau-2005}, \cite{duffy-2006}, regression based methods \cite{vanroy-1999}, \cite{longstaff-2001}, \cite{kohler-2010}), stochastic mesh methods \cite{broadie-2004}, etc. The PDE method is the most accurate method to compute option prices and deltas throughout the options horizon. However, when the dimension of an American option, i.e. the number of underlying assets, is greater than 3, numerical solution of PDEs is infeasible, as the complexity grows exponentially with the dimension. When the dimension $d$ is moderate (e.g. $d \leq 20$), the regression-based Longstaff-Schwartz method \cite{longstaff-2001} is widely considered as the state-of-the-art approach for computing option prices. However, Figure 1 in \cite{bouchard-2012} shows that using such regressed values as the spacetime solution is inaccurate. This is reinforced in Figure 8 in \cite{chen-2019}, which shows the clear advantages of using a neural network framework for pricing and hedging.

Outside of classical methods, there is a growing number of applications of neural networks on the American option problem. Some of the early approaches that solves the American option problem with neural networks use simple feedforward networks to approximate the price of the option for restricted number of assets ($d\leq 10$) as seen in \cite{kohler-2010} and \cite{haugh-2004}. Later works extend the scope, and incorporate chained deep neural networks to solve the American option problem \cite{e-2017}, \cite{beck-2018}, \cite{han-2016}, and \cite{fujii-2017}. The work of \cite{guler-2019}, follows the line of work of \cite{han-2016} and \cite{e-2017}, but adds a regularizing term for added stability. The use of a neural network grid has been proposed by \cite{sirignano-2018}. The work of \cite{salvador-2020} uses a neural network to learn option prices of the linear complementary problem of option pricing in an unsupervised manner. More recently, authors of \cite{herrera-2021} proposed a randomized reinforcement learning using recurrent networks to solve the American option problem. However, the methods mentioned do not provide methodology to determine the delta of an American option using neural networks. Delta and price approximation for all spacetime using a deep residual network was proposed by \cite{chen-2019}. In our work we refer to the method of \cite{chen-2019} as the deep residual learning (DRL) method. The work of \cite{hure-2020} provides theoretical convergence results for universal approximators, which RNNs are as shown in \cite{schafer-2007}. The DRL method looks at solving the pricing problem by using neural networks to approximate the solution to stochastic differential equations (SDEs), backward stochastic differential equations (BSDEs) and partial differential equations (PDEs). Their method solves a lot of issues of some earlier work; however, one major drawback is that it is both computationally expensive in time and memory.

In this paper, we propose to use two deep recurrent neural network regression framework to solve high dimensional American style options problems. One of our networks is used to approximate the continuation price and the other is used to approximate the delta. This is in line with the general formulation outlined in \cite{hure-2020}. Our method is used to approximate the solution to the underlying backward stochastic differential equation (BSDE) of high dimensional American style options, it solves for price and delta in all of space and time. The contributions of this paper:
\begin{itemize}
    \item We extend the framework of \cite{chen-2019}, and propose a deep Recurrent neural network (RNN) framework that can be used to solve the American option problem. Using deep RNNs the number of networks that need to be trained is reduced, which means we can speed up the computation time of high dimensional option pricing. We show that our proposed framework achieves a better time and memory complexity for ($d > 20$). Our method computes option prices and deltas faster in the absolute sense compared to the work presented in \cite{chen-2019}, while using much less memory.
    \item We introduce the approximation of deltas in all spacetime using our deep RNN framework, this has not been done by any other RNN option pricing paper to the best of our knowledge. This is done using the pathwise derivative approach which is extended to general time $t$. We incorporate the domain knowledge of American options into our deep RNN framework to approximate accurate, time and memory efficient prices and deltas.
\end{itemize}
In contrast to the approach in \cite{chen-2019}, our work presents a few key differences.
\begin{itemize}
    \item \textit{Different Neural Network Architecture}: The most obvious difference in our approach is the use of \textit{two} deep recurrent network. One deep recurrent network learns the value of the option and the other learns the delta of the the option. By using deep recurrent networks to approximate the price and delta of American option through time, allows us to effectively reduce the number of networks that needs to be trained and stored to two networks as opposed to multiple networks as shown in \cite{chen-2019}.
    \item \textit{Different Optimization criteria}: Our loss function uses an approximation of the continuation function to determine the option value and delta. This is used to train the neural network to determine the value and delta of the option. The delta of the option is determined using the pathwise derivative method at any time $t$.
\end{itemize}

\subsection{Outline of the Paper}
Section \ref{sec:American} outlines the American options problem, Section \ref{sec:BSDE} presents the resulting BSDE formulation and the optimization problem used to solve the American options problem. In Section \ref{sec:RNN} we provide a quick overview on \RNN s, \GRU s and deep recurrent networks. We propose a deep recurrent network regression framework for approximating prices and deltas. We discuss the architecture, feature selection, the approximation algorithm, and complexity of our approach. In Section \ref{sec:computation} we present the runtime and memory complexity of our method compared with the DRL method proposed by \cite{chen-2019}. In Section \ref{sec:Experiments}, we look at pricing and hedging the multidimensional American call options. We compare our method to the method proposed by \cite{chen-2019} and the finite difference method, to show how our proposed method performs compared to the method of \cite{chen-2019}, when we set a runtime limit on the models. We also see cases where our proposed method can reach similar accuracies to the DRL method in shorter runtime. Section \ref{sec:Conclusion} concludes our paper and outlines future work in neural network applications in option pricing.

\section{American Options}\label{sec:American}

In this paper, we use capital and lowercase letters to distinguish random and deterministic variables respectively. Suppose we have a basket of $d$ stocks; with price processes $\vv{S}(t) = [S_1(t), ..., S_d(t)]^\top \in \mathbb{R}^d$. Note that $S$ is a random variable and $\vv{S}$ is a vector of random variables. Furthermore, in order to distinguish vectors from scalars we will use an overline for vectors and bold text for matrices. Let $t \in [0,T]$ be the time up to maturity $T$, and let $r$ be the interest rate. Let $\delta_i$, $\mu_i = (r-\delta_i)$ and $\sigma_i$ be the dividend, drift and volatility of stocks $i=1,...,d$. Let $\mathbf{\rho} \in \mathbb{R}^{d\times d}$ be the correlation matrix and we define the correlated random variable $dW^i(t) = \sum\limits_{j=1}^{d}L_{i,j}\phi_j(t) \sqrt{dt}$, where $\phi_j(t)\sim N(0,1)$ are i.i.d, and $\mathbf{L}$ is the Cholesky factor of $\bm\rho$, i.e. $\bm\rho=\mathbf{LL}^\top$. Given a vector of initial prices, $\vv{s}(0)\in\mathbb{R}^d$, the price vector $\vv{S}(t)$ follows the dynamics given by geometric Brownian motion. For each element $i$ we have:
\begin{equation}
    dS_i(t) = \mu_i S_i(t) dt + \sigma_i S_i(t) dW_i(t),
    \label{eq:SDE}
\end{equation}
with initial condition $S_i(0) = s_i(0)$.
For simplicity, we assume that the market has sufficient liquidity and we do not consider market frictions.
Let $f(\vv{s}(t))$ be the payoff of the American option at the realized state $\vv{s}(t)$. This payoff function is a function of $g(\vv{s}(t))$, where $g(\vv{s}(t))$ varies depending on the payoff style, i.e. the commonly seen max call option has a payoff of the form $\max\limits_{i=1,...,d}\{s_i(t)-K\}$, and the general payoff has the form:
\begin{equation}
    f(\vv{s}(t)) := \max\{g(\vv{s}(t)), 0\}.
    \label{eq:payoff}
\end{equation}

The American option is more complicated to price than the European option with the same payoff, because it can be exercised at any point $t$. To determine if we want to exercise or not, we need the continuation price.

Let $c(\vv{s}(t))$, be the continuation price of the American call option, i.e. the discounted payoff of the option when it is not exercised at time $t$ at price $\vv{s}(t)$. We use $\tau=[t,T]$ to denote the optimal stopping time; which is the best time to exercise the contract between time $t$ and maturity $T$. Then we can define the continuation price as:
\begin{equation}
    c(\vv{s}(t)) = \max\limits_{\tau \in [t,T]}\mathbb{E}[e^{-r (\tau - t)}f(\vv{S}(\tau))|\vv{S}(t) = \vv{s}(t)].
\label{eq:continuation}
\end{equation}
In otherwords, we exercise the option when we determine that the continuation price is the same as the payoff. Then the value of the American style option is given by:
\begin{align}
    v(\vv{s}(t),t) &= \max\{f(\vv{s}(t)),\ c(\vv{s}(t))\} \nonumber \\
    &= \begin{cases} 
      f(\vv{s}(t),t) & f(\vv{s}(t),t)\geq c(\vv{s}(t)) \\
      c(\vv{s}(t)) & \textnormal{otherwise} 
   \end{cases},
   \label{eq:AmOp}
\end{align}
and the delta of the American style option is given by:
\begin{equation}
    \nabla v(\vv{s}(t),t) = \begin{cases} 
      \nabla f(\vv{s}(t),t) & f(\vv{s}(t),t)\geq c(\vv{s}(t)) \\
      \nabla c(\vv{s}(t)) & \textnormal{otherwise} 
   \end{cases}.
   \label{eq:AmOp_delta}
\end{equation}
The optimal exercise boundary can be found by using the continuation price such that
\begin{equation}
    c(\vv{s}(t)) = f(\vv{s}(t),t),
    \label{eq:equal}
\end{equation}
where $c(\vv{s}(t))$ is found by solving the optimization problem \eqref{eq:continuation}.

In practical applications of hedging, we are interested in the delta of the American option. The delta of an option is the first derivative of the value of the option with respect to the stock price, $\nabla v(\vv{s}(t), t) = [\frac{\partial v}{\partial s_1}(\vv{s}(t),t), ... \frac{\partial v}{\partial s_d}(\vv{s}(t),t)]^\top$. One of the advantages of our approach is that it can solve both the option price $v(\vv{s}(t),t)$ and delta $\nabla v(\vv{s}(t), t)$ on the entire space-time in a more time efficient way compared to \cite{chen-2019}.

\section{Backward Stochastic Differential Equation(BSDE) Formulation} \label{sec:BSDE}

\subsection{BSDE Formulation}

We convert the American option problem into a BSDE using the following theorem:
\begin{theorem}(BSDE formulation)
Assume that an American option is not exercised at time $[t,t+dt]$. Then the price of an American option at time $t$ satisfies the following BSDE
\begin{equation}
    dc(\vv{S}(t)) = r c(\vv{S}(t))dt+\sum\limits_{i=1}^{d}\sigma_i\frac{\partial c}{\partial s_i}(\vv{S}(t))dW_i(t).
    \label{eq:BSDE}
\end{equation}
where $\vv{S}(t)$ satisfies \eqref{eq:SDE} and $r$, $\sigma_i$ and $dW_i(t)$ are defined as in \eqref{eq:SDE}.
\end{theorem}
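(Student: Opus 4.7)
The plan is to combine Itô's lemma applied to $c(\vv{S}(t),t)$ with a dynamic-programming argument showing that, on the continuation region, $c$ satisfies a Black--Scholes--type PDE. The Itô expansion will produce a drift term consisting of the backward parabolic operator applied to $c$ plus a martingale part; invoking the PDE will collapse the drift to $rc\,dt$, yielding \eqref{eq:BSDE}.

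First I would invoke the dynamic programming principle for the optimal stopping problem \eqref{eq:continuation}: because continuation is optimal on $[t,t+dt]$, the discounted process $e^{-rs}c(\vv{S}(s))$ is a (local) martingale under the risk-neutral measure on this interval. By Feynman--Kac, this is equivalent to $c$ solving
\begin{equation*}
\frac{\partial c}{\partial t} + \sum_{i=1}^{d} \mu_i s_i \frac{\partial c}{\partial s_i} + \frac{1}{2}\sum_{i,j=1}^{d} \rho_{ij}\sigma_i\sigma_j s_i s_j \frac{\partial^2 c}{\partial s_i \partial s_j} = r\,c
\end{equation*}
in the interior of the continuation region. Next I would apply the multi-dimensional Itô formula to $c(\vv{S}(t),t)$, using \eqref{eq:SDE} together with the quadratic covariation $d\langle W_i, W_j\rangle = \rho_{ij}\,dt$ induced by the Cholesky factorization $\bm{\rho} = \mathbf{L}\mathbf{L}^\top$. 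The resulting $dt$-term is precisely the left-hand side of the PDE above, which by the previous step equals $r\,c(\vv{S}(t))\,dt$, while the stochastic-integral terms $\sigma_i S_i \tfrac{\partial c}{\partial s_i}\,dW_i$ supply the martingale component appearing in \eqref{eq:BSDE}.

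The main obstacle is justifying the regularity and martingale property required to invoke Itô in the first place. The continuation value $c$ is known to be $C^{1,2}$ in the interior of the continuation region, but smoothness can fail on the optimal exercise boundary, and the dynamic programming / Feynman--Kac identification of $c$ with a classical PDE solution itself relies on this regularity. Because the theorem only asserts the BSDE under the hypothesis that exercise does not occur on $[t,t+dt]$, the trajectory stays in the open continuation region, and the classical free-boundary theory for American options supplies the needed smoothness; carefully marshalling those regularity results, rather than the Itô calculation itself, is the real technical content of the proof.
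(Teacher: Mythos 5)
Your proposal is correct and takes essentially the same route as the proof the paper defers to in \cite{chen-2019}: the dynamic programming principle makes the discounted continuation value a martingale on the continuation region, Feynman--Kac identifies $c$ with the Black--Scholes PDE solution there, and It\^o's formula then collapses the drift to $r c\,dt$ and exposes the martingale part. One worthwhile observation from your computation: It\^o's lemma applied to \eqref{eq:SDE} yields the diffusion coefficient $\sigma_i S_i \frac{\partial c}{\partial s_i}$, which agrees with the discretization \eqref{eq:bsdedis} but not with \eqref{eq:BSDE} as printed, so the missing factor of $S_i$ in the theorem statement appears to be a typographical omission rather than a flaw in your argument.
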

\begin{proof}
Interested readers to the proof are referred to \cite{chen-2019}.
\end{proof}
In this paper, we solve the BSDE for the option price $v(S(t),t)$ and the delta $\frac{\partial v}{\partial S_i}(S_i(t),t)\}$ at the exercise boundary. Using \eqref{eq:BSDE} and \eqref{eq:equal} we get the following integral form of the BSDE that models the option value at the exercise boundary:
\begin{equation}
    c(\vv{S}(t)) = f(\vv{S}(T)) + \int\limits_{t}^{T}e^{-rq}c(\vv{S}(q))dq - \int\limits_{t}^{T} e^{-rq} \sum\limits_{i=1}^{d}\frac{\partial c}{\partial S_i}(S_i(q)) \sigma_i dW_i(q).
\label{eq:value}
\end{equation}

The significance of the BSDE formulation is two fold; one, it couples the option value, $v(\vv{S}(t))$, and the delta, $\nabla v(\vv{S}(t))$ \cite{chen-2019}. If the approximation for the value is correct then the approximation for the delta will also be correct, which allows us to perform a complete hedging process. The second significance of the BSDE formulation is that it allows us to design a less expensive neural network compared to the Hamilton-Jacobi-Bellman (HJB) partial differential equation (PDE) as shown in \cite{sirignano-2018}. This is because using the HJB PDE requires us to compute and store the Hessian tensor, which is an $\mathcal{O}(Md^2)$ tensor, where $M$ is the number of samples used to train the neural network. Compared to the HJB formulation, the BSDE formulation only requires the Jacobian tensor of size $\mathcal{O}(Md)$.

\subsection{Continuation price and delta at time $t$}

In our application we need to estimate the continuation price and delta at $t$. Using the BSDE formulation we can solve for the continuation price and delta of the option across all space and time. The authors of \cite{chen-2019} shows that approximating the continuation value using the discounted expected value from the previous timestep is reasonable, the expected value of the discounted continuation price is given as
\begin{equation}
    c(\vv{s}_t,t) = \mathbb{E}[e^{-r\Delta t}v(\vv{S}_{t+\Delta t})].
    \label{eq:value0}
\end{equation}
However, we found that this is only the case at maturity, for general approximation of the continuation price at time $t$, it is better to solve the optimization problem \eqref{eq:continuation}, which gives us a continuation price at $t$ as:
\begin{equation}
    c(\vv{s}(t)) = \mathbb{E}[e^{-r (\tau - t)}f(\vv{S}(\tau))].
\label{eq:continuation_discrete}
\end{equation}
The work of \cite{broadie-1996} shows a framework for estimating the delta of a European call options using the pathwise derivative and was extended to the American option in \cite{thom-2009}. The pathwise derivative adapted for American options is shown in \cite{chen-2019} for $t=0$ and is given as
\begin{equation}
    \frac{\partial v}{\partial s_i}(\vv{s}_0,0) = \mathbb{E}[e^{-r\tau}\frac{\partial f}{\partial S_i}(\vv{S}(\tau))\mathds{1}_{f\geq c}\frac{\partial S_i}{\partial s_i(0)}].
    \label{eq:pthdelta0}
\end{equation}
Where the function $\mathds{1}_{f\geq c}$ is equal to $1$ when the payoff $f$ is greater than the continuation value $c$ and $0$ otherwise at $\tau$. We can see in \cite{broadie-1996} that when the underlying asset evolves following the process \eqref{eq:SDE}, $\frac{\partial S_i}{\partial s_0} = \frac{S_i}{s_0}$. We can extend this notion to the increment $[t,\tau]$ under the same process since
\begin{equation}
    \frac{\partial S_i(\tau)}{\partial s_i(t)} = e^{(\mu-\frac{\sigma^2}{2})(\tau - t) + \sigma\sqrt{(\tau - t)}W_{t}} = \frac{S_i(\tau)}{s_i(t)}.
    \label{eq:del_price}
\end{equation}
In our method we use this concept to approximate the delta of the continuation price at anytime $t$. Let $\frac{\partial c}{\partial s_i}$ be the delta of the continuation price then we estimate the delta using \eqref{eq:del_price} and is given by
\begin{equation}
    \frac{\partial c}{\partial s_i}(\vv{s}_i(t),t) = \mathbb{E}[e^{-r(\tau - t)}\frac{\partial f}{\partial S_i}(\vv{S}(\tau))\frac{\partial S_i(\tau)}{\partial s_i(t)}].
    \label{eq:pthdelta}
\end{equation}
The pathwise derivative approach may not be applicable if the derivative $\partial S_i /\partial s_i$ cannot be evaluated. This may occur if the underlying asset does not follow the process \eqref{eq:SDE}. However, in our framework, we are always able to evaluate \eqref{eq:pthdelta}. This is because our framework uses a deep RNN to approximate the continuation price of the option, and using autodifferentiation of the network with respect to the underlying asset price we can always evaluate the derivative $\partial c / \partial s_i(t)$.

\subsection{Discretization of BSDE}

In this paper, we use the Euler-Maruyama method as in \cite{chen-2019} and \cite{hure-2020} to simulate SDE \eqref{eq:SDE} and BSDE \eqref{eq:value}. Let $m=1,..,M$ be the indices of simulation paths and $n=1,...,N$ be the indices of the discrete timesteps from $0$ to $T$. Let $\Delta t = T/N$ be the timestep size and let $t^n = n\Delta t$. We discretize $dW_i(t)$ as 
\begin{equation*}
    (\Delta W_i)_m^n = \sum\limits_{j=1}^{d}L_{i,j}(\phi_j)_m^n\sqrt{\Delta t}.
\end{equation*}
We discretize \eqref{eq:SDE} as
\begin{align}
    (S_i)_m^0 &= (s_i)_m^0,\ &&i=1,..,d\\
    (S_i)_m^{n+1} &= (1+(r-\delta_i)\Delta t)(S_i)_m^n + \sigma_i(S_i)_m^n(\Delta W_i)_m^n,\ &&i=1,...,d.
    \label{eq:sdedis}
\end{align}
We can discretize the payoff function at $n+1$ as
\[
    f(\vv{S}_m^{n+1}) = \max\{g(\vv{S}_m^{n+1}),0\}
\]
For $n=N-1,...,0$, we discretize \eqref{eq:value} as 
\begin{equation}
    c(\vv{S}_m^{n+1},t^{n+1}) = (1+r\Delta t)c(\vv{S}_m^n,t^{n}) + \sum\limits_{i=1}^{d}\sigma_i(S_i)_m^n\frac{\partial c}{\partial s_i}(\vv{S}_m^n,t^n)(\Delta W_i)_m^n,
    \label{eq:bsdedis}
\end{equation}
and the value of the option is discretized as
\[
    v(\vv{S}_m^{n+1},t^{n+1}) = \max\{f(\vv{S}_m^{n+1}),c(\vv{S}_m^{n+1},t^{n+1})\}
\]
with terminal condition:
\begin{equation}
    v(\vv{S}_m^N,t^N) = f(\vv{S}_m^N).
    \label{eq:term}
\end{equation}
Then we solve \eqref{eq:bsdedis} for $c(\vv{S}_m^n,t^{n}),\partial c(\vv{S}_m^n,t^{n})/\partial s_i$ using the least squares solution presented in section \ref{sec:LS-BSDE}. This is then used to compute the option price, \eqref{eq:AmOp} and delta, \eqref{eq:AmOp_delta}. This allows to account for the possibility of early exercise as we compare the payoff and the continuation value.

We solve the discretized BSDE by using \eqref{eq:sdedis} to generate samples of the underlying asset prices $\{\vv{S}_m^n\}$ for all time indexed by $n$ and samples indexed by $m$. Then, starting at $n=N$, we compute the value \eqref{eq:term} and compute $v(\vv{S}_m^n,t^n)$ by solving \eqref{eq:value} backwards for each time $n=N-1,...,0$. We then obtain the option price and delta at $t=0$ using our neural network output at $t=0$.

\subsection{Least squares solution to the BSDE}\label{sec:LS-BSDE}

We consider the $n$-th timestep $t^n$, and introduce a short notation for the value function, $v^n(\vv{s})\equiv v(\vv{s},t^n)$, and the delta function, $\nabla v^n(\vv{s})\equiv \nabla v(\vv{s},t^n)$. We use a short notation for the payoff function $f^n(\vv{s})\equiv f(\vv{s},t^n)$ and define $\nabla f^n(\vv{s}) = [\frac{\partial f^n}{\partial s_1}(\vv{s}), ..., \frac{\partial f^n}{\partial s_d}(\vv{s})]^\top$. Solving \eqref{eq:bsdedis} requires us to find $d$ dimensional functions $v^{*n}(\vv{s})$ and $\nabla v^{*n}(\vv{s})$, such that they satisfy \eqref{eq:bsdedis}. When $d=1$, this can be solved using closed form solutions, but for $d \geq 2$ a closed form solution does not exist.

The loss function, presented in \cite{chen-2019} minimizes the residual $R^n$ between $v^{n+1}$ and \eqref{eq:bsdedis} given by 
\begin{equation*}
    R^n = v^{n+1} - (1+r\Delta t)c(\vv{S}(t))+\sum\limits_{i=1}^{d}\sigma_i\frac{\partial c}{\partial s_i}(\vv{S}(t))\Delta W_i(t).
\end{equation*}

In this paper we present a different loss function which allows us to learn the price and delta function. Let $y^n$ and $\nabla y^n$ be the approximation for the option price and delta, then we define our loss function, $\mathcal{L}$, as 
\begin{equation}
    \mathcal{L}[y^n,\nabla y^n] =  \mathbb{E}[c^{n}-y^n|^2] + \Delta t\mathbb{E}[|\nabla c^{n}\cdot\tilde\sigma - \nabla y^n\cdot\tilde\sigma|^2].
    \label{eq:loss}
\end{equation}
where $\tilde\sigma = \sigma\times\bm\rho$, the symbol $\times$ is the cross product and the symbol $\cdot$ denotes the dot product. The goal of \eqref{eq:loss} is to find the approximation $y^n$ and $\nabla y^n$ to the option price and the option delta. The authors of \cite{hure-2020}, show that the loss function \eqref{eq:loss} used in our approach convergence to the viscosity.

Note, if we denote $\tau$ as $\tau = \tilde{n}\Delta t$ then the continuation price can be approximated using \eqref{eq:continuation_discrete} as:
\begin{equation}
    c^n \approx \mathbb{E}[e^{-r(\tilde{n} - n)\Delta t}f^{\tilde{n}}].
    \label{eq:cont_approx}
\end{equation}
Assuming the process follows the process in \eqref{eq:SDE}\footnote{For the case where we do not follow the process \eqref{eq:SDE}, we can use autodifferentiation to determine the delta, $\partial c^{n+1}/\partial S_i^{n}$.} we can, apply the pathwise derivative method, to approximate the continuation delta using \eqref{eq:pthdelta} as:
\begin{equation}
    \nabla c^n \approx \mathbb{E}[e^{-r(\tilde{n} - n)\Delta t}\nabla f^{\tilde{n}}\frac{S^{\tilde{n}}}{S^{n}}].
    \label{eq:cont_delta_approx}
\end{equation}
Using \eqref{eq:cont_approx} and \eqref{eq:cont_delta_approx} we can determine the $v^n$ and $\nabla v^n$ using \eqref{eq:AmOp}-\eqref{eq:AmOp_delta}. We use the loss \eqref{eq:loss} over loss of \cite{chen-2019} because:
\begin{itemize}
    \item the loss \eqref{eq:loss} does not require the integration over time steps, reducing the need to store the results of each time step. This also allows us to train the recurrent network for all time in one step.
    \item In our approach the delta is approximated by a neural network, which allows us to calculate delta at $t=0$ for any stopping time $\tau$ using the pathwise derivative method without having to worry if $\partial v / \partial s_0$ is evaluable.
\end{itemize}

Since the BSDE is a backward process, we start our computation at $n=N$, where the value of the option is given by the payoff \eqref{eq:payoff}. This means that the risk-neutral price of the option at $n=N-1$, needs to be the discounted value of the option. To train our network, we must find the price and delta at time $n$ such that the value of the option at $n$ is given by \eqref{eq:AmOp}, where the option price is approximated by our deep RNN. Then for timestep $n$ our value is the output of our deep RNN for computing prices, $y^n$. Similarly, the delta at $n$ is the output of our deep RNN for computing deltas, $\nabla y^n$.

We evaluate the loss as the expected value over $m$ simulations such that the approximations $y^n$ and $\nabla y^n$ minimizes \eqref{eq:loss}
\begin{equation}
    \{{y}^{*n},\nabla {y}^{*n}\} = \textnormal{arg}\min\limits_{y^n,\nabla y^n}[\mathcal{L}[{y}^n,\nabla {y}^n]].
    \label{eq:optprob}
\end{equation}

\section{Deep Recurrent Neural Network Formulation}\label{sec:RNN}

To determine the approximation functions in \eqref{eq:optprob} is non-trivial. To solve this problem, we can use some parameterized function, such as polynomials, to approximate $y^n$, which allows us to solve the optimization problem in the parameter space. Classical methods such as the Longstaff-Schwartz method is built on this principle; however, it has some drawbacks as mentioned in \cite{chen-2019}. More modern methods as seen in \cite{herrera-2021} can be used to alleviate some this by using neural networks to learn such functions. We note that, methods such as the Longstaff-Schwartz method and the method proposed by \cite{herrera-2021} do not solve the BSDE problem outlined in \eqref{eq:BSDE}.

A main drawback of the deep neural network approach in \cite{chen-2019} is that they require a feedforward neural network for every time step. It may be somewhat improved by skipping every $J$ time timsteps, but it still requires to train $N/J$ networks where $N$ is the total number of timesteps. When $N$ is large, the training time can be very costly. 

Our proposed approach is to use a deep recurrent network to compute the approximate option price $y^n$ and delta $\nabla y^n$. The advantage of our deep recurrent network approach is that we only require two networks. One is to used to compute $y^n$ and the other for $\nabla y^n$ over all time and space. This leads to a tremendous saving in time and memory for large $N$.

\subsection{Sequence of Neural Networks}

Our approach is to use two deep RNNs to represent the approximate option price function, $y^n$ and the option delta function, $\nabla y^n$. The main advantage of using a Recurrent network over traditional feedforward networks is that the network is able to learn time dependent dynamic problems \cite{elman-1990}. There exists many different types of RNNs and we refer readers to \cite{cho-2014} and \cite{hochreiter-1997} to review standard GRUs and LSTMs. In our approach we augment traditional GRUs to solve the American option problems.

\begin{figure}[!htbp]
    \centering
    \includegraphics[width=\textwidth]{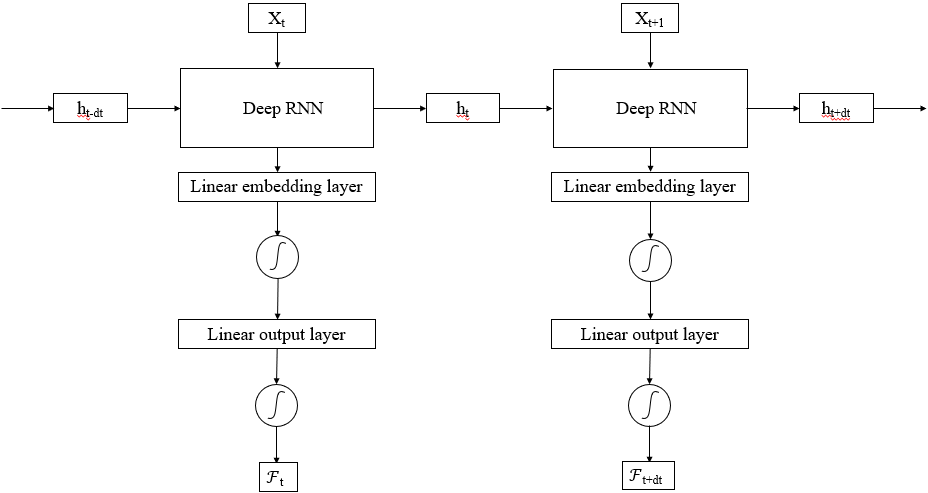}
    \caption{Deep Recurrent network for price rolled out through time. Our network is composed of a recurrent unit, an embedding linear layer and an output linear layer. Takes in a matrix of inputs $\mathbf{x}$, hidden information $\mathbf{h}$ and outputs $\mathcal{F}$.}
    \label{fig:network}
\end{figure}

\begin{figure}[!htbp]
    \centering
    \includegraphics[width=0.65\textwidth]{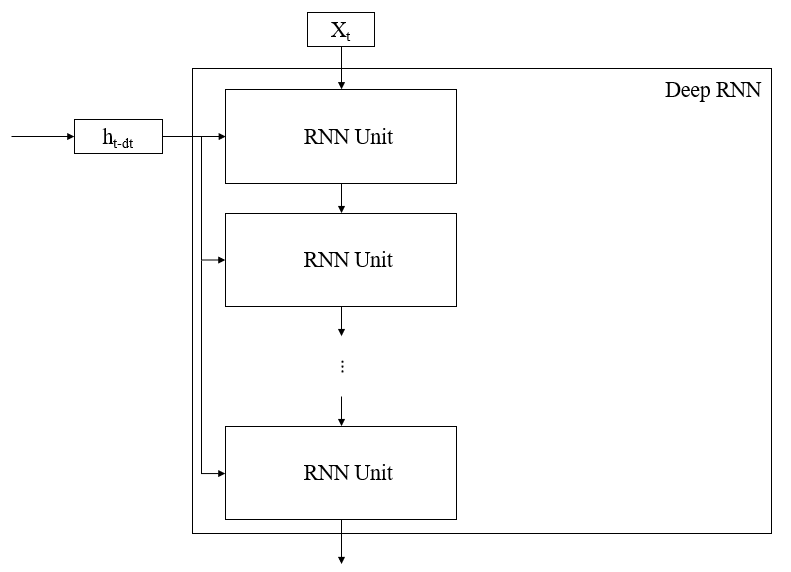}
    \caption{Stacked RNN units within the deep RNN}
    \label{fig:network_units}
\end{figure}

In this paper we present two recurrent networks, rolled out through time, $\{y^n(\mathbf{s};\Omega)| n = N-1,...,0\}$ and $\{\nabla y^n(\mathbf{s};\Theta)| n = N-1,...,0\}$. The trainable set of parameters $\{\Omega,\Theta\}$ of the network approximates option price and delta for each time step $n$. We will first present our recurrent network used to approximate price, $\{y^n(\mathbf{s};\Omega)| n = N-1,...,0\}$.

The arbitrage free price of an option should differ from the previous price and delta by a function with magnitude $\mathcal{O}(\Delta t)$. As shown in \cite{chen-2019}, using a linear combination of the approximations $y^n$ and the discounted $y^{n+1}$ provides an estimate to the continuation price with lower variation, which improves the accuracy of the regression. 

We note that our problem is solved backwards in time, and we use ``previous", ``current" and ``next" to refer to the $(n+1)^\textnormal{th}$, $n^\textnormal{th}$, and $(n-1)^\textnormal{th}$ timesteps respectively. In our method, the approximation at time $n = N-1,...,0$, is a linear combination of the discounted approximation from the $(n+1)^{th}$ timestep and the $n^{th}$ approximation from the deep RNN, \cite{chen-2019}, has shown that estimating the continuation price in this manner is more stable. We introduce the trainable coefficient $\alpha \in [0,1]$, for $y^n$. The coefficient $\alpha$ determines how much of the approximation is given by the discounted $y^{n+1}$ and the deep RNN approximation $y^n$. Mathematically, we approximate the continuation price as
\begin{align}
    y^N(\mathbf{s}) &= f^N(\mathbf{s})\nonumber \\
    y^n(\mathbf{s};\Omega) &= (1-\alpha)e^{-r \Delta t}c^{n+1}+\alpha \mathcal{F}(\mathbf{s};\Omega^{n}),\ &&n=N-1,...,0
\end{align}
where $\mathcal{F}(\mathbf{s};\Omega)$ is the network approximation of the price. The set of parameters $\Omega$ includes all the weights and biases used in Algorithm \ref{alg:price_network}.

Our deep RNN to compute continuation price, $\mathcal{F}(\mathbf{s};\Omega)$, is shown in Figure \ref{fig:network}. We parameterize it by an $L$-layer deep RNN with GRU units with a linear embedding layer and output layer as shown in Figure \ref{fig:network_units}. A typical GRU unit is illustrated in Figure \ref{fig:gru_scheme}. Let the input of the deep RNN be $\mathbf{X} \in \mathbb{R}^{M \times 2d}$ and previous hidden state be $\mathbf{h}^{n-1} \in \mathbb{R}^{M \times 2d}$. The GRU unit is composed of the update gate $\mathbf{z}^n$ which decides if the a new hidden state should be stored, the reset gate $\mathbf{r}^n$, decides which information to keep from the previous sequence, the new hidden state $\hat{\mathbf{h}}^n$ stores potential information from a filtered previous state and current state. Mathematically the GRU is given by the set of equations
\begin{align}
        \mathbf{r}^n &= \hat\sigma(\mathbf{W}_r^{n\top} [\mathbf{h}^{n+1}, \mathbf{X}^n] + b_r)\label{eq:gru_eq_start}\\
        \mathbf{z}^n &= \hat\sigma(\mathbf{W}_z^{n\top} [\mathbf{h}^{n+1},\mathbf{X}^n] + b_z)\\
        \hat{\mathbf{h}}^n &= \textnormal{tanh}(\mathbf{W}_{\hat{h}}^{n\top} [\mathbf{r}^n \odot \mathbf{h}^{n+1},\mathbf{X}^n] + b_{\hat{h}})\\
        \mathbf{h}^n &= (1-\mathbf{z}^n)\odot \mathbf{h}^{n+1} + \mathbf{z}^n \odot \hat{\mathbf{h}}^n,
    \label{eq:gru_eq_end}
\end{align}
where $\hat\sigma$, tanh are the typical sigmoid and hyperbolic tangent activation functions seen in \cite{goodfellow-2016}. The parameters $\mathbf{W}^n_r,\mathbf{W}^n_z, \mathbf{W}^n_h \in \mathbb{R}^{d\times d}$ are trainable weights and $b_r,b_z,b_h \in \mathbb{R}^{d}$ are trainable offsets.

In the following part, we temporarily use a subscript square bracket for the layer index $l=1,...,L$. We construct the $L$-layer deep RNN for continuation price as
\begin{itemize}
    \item For layers, $l=1$:
    \begin{align*}
        \mathbf{r}^n_{[l]} &= \hat\sigma(\mathbf{W}_r^{n\top} [\mathbf{h}^{n+1}, \mathbf{X}^n] + b_r)\\
        \mathbf{z}^n_{[l]} &= \hat\sigma(\mathbf{W}_z^{n\top} [\mathbf{h}^{n+1},\mathbf{X}^n] + b_z)\\
        \hat{\mathbf{h}}^n &= \textnormal{tanh}((\mathbf{W}_{\hat{h}}^{n\top})_{[l]} [\mathbf{r}^n_{[l]} \odot \mathbf{h}^{n+1},\mathbf{X}^n] + b_{\hat{h}})\\
        \mathbf{h}^n_{[l]} &= (1-\mathbf{z}^n_{[l]})\odot \mathbf{h}^{n+1} + \mathbf{z}^n_{[l]} \odot \hat{\mathbf{h}}^n_{[l]}.
    \end{align*}
    \item For layers, $l=2,...,L$:
    \begin{align*}
        \mathbf{r}^n_{[l]} &= \hat\sigma(\mathbf{W}_r^{n\top} [\mathbf{h}^{n+1}, \mathbf{h}^n_{[l-1]}] + b_r)\\
        \mathbf{z}^n_{[l]} &= \hat\sigma(\mathbf{W}_z^{n\top} [\mathbf{h}^{n+1},\mathbf{h}^n_{[l-1]}] + b_z)\\
        \hat{\mathbf{h}}^n &= \textnormal{tanh}((\mathbf{W}_{\hat{h}}^{n\top})_{[l]} [\mathbf{r}^n_{[l]} \odot \mathbf{h}^{n+1},\mathbf{h}^n_{[l-1]}] + b_{\hat{h}})\\
        \mathbf{h}^n_{[l]} &= (1-\mathbf{z}^n_{[l]})\odot \mathbf{h}^{n+1} + \mathbf{z}^n_{[l]} \odot \hat{\mathbf{h}}^n_{[l]}.
    \end{align*}
    \item For the embedding layer:
    \begin{equation*}
        \mathbf{e}^n = \textnormal{swish}(\mathbf{W}^{n\top}_{e} \mathbf{h}^n_{[L]} + b^{n}_{e}),
    \end{equation*}
    where $\mathbf{W}^{n}_{e} \in \mathbb{R}^{d\times d}$ are trainable weights and $b^{n}_{e} \in \mathbb{R}^{d}$ is the trainable bias.
    \item For the output layer:
    \begin{equation*}
        \mathcal{F}(\mathbf{s};\Omega^{n}) = \textnormal{softplus}(\mathbf{W}^{n\top}_{\mathcal{F}} \mathbf{e}^n + b^{n}_{\mathcal{F}}),
    \end{equation*}
    where $\mathbf{W}^{n}_{\mathcal{F}} \in \mathbb{R}^{d\times d}$ are trainable weights and $b^{n}_{\mathcal{F}} \in \mathbb{R}^{d}$ is the trainable bias. The swish activation function is given in \eqref{eq:swish}
\end{itemize}

We construct the deep RNN to compute the deltas, $\mathcal{G}(\mathbf{s};\Theta)$ in a similar fashion to $\mathcal{F}(\mathbf{s};\Omega)$. The main difference between the two networks is in the output layers where
\begin{equation*}
    \mathcal{G}(\mathbf{s};\Theta^{n}) = \hat\sigma(\mathbf{W}^{n\top}_{\mathcal{G}} \mathbf{e}^n + b^{n}_{\mathcal{G}}),
\end{equation*}
where $\mathbf{W}^{n}_{\mathcal{G}} \in \mathbb{R}^{d\times d}$ are trainable weights and $b^{n}_{\mathcal{G}} \in \mathbb{R}^{d}$ is the trainable bias. The sigmoid activation function is given in \eqref{eq:sigmoid}. The sigmoid function is used here as it is the derivative of the softplus function with respect to the input.

To compute deltas, we introduce the trainable coefficient $\beta \in [0,1]$. $\beta$, like $\alpha$ determines how much of the approximation is given by the discounted $\nabla y^{n+1}$ and the deep RNN approximation $\nabla y^n$. Mathematically, we approximate the option delta as
\begin{align}
    \nabla y^N(\mathbf{s}) &= \nabla f^N(\mathbf{s})\nonumber \\
    \nabla y^n(\mathbf{s};\Theta) &= (1-\beta)e^{-r \Delta t}\nabla c^{n+1}\frac{S^{n+1}}{S^{n}}+\beta \mathcal{G}(\mathbf{s};\Theta^n),\ &&n=N-1,...,0.
\end{align} 
The set of parameters $\Theta$ includes all the weights and biases used in Algorithm \ref{alg:delta_network}. 

Our deep Recurrent networks are structured as a one way stacked recurrent units as shown in Figure \ref{fig:network_units}. In our method, we use $L=7$ units. For $n = N-1,...,0$ our input, $\mathbf{X}^{n}$ passes through a the first Recurrent unit. The subsequent Recurrent units use the output of the previous unit as the input of the next, while the previous hidden state, $\mathbf{h}^{n+1}$, is passed to all the recurrent units. The output of the deep RNN is passed through linear embedding layer an activation function then a linear output layer, and a final activation function which is chosen based on the domain information of the problem.

\subsection{Recurrent unit selection}\label{rns}

A recurrent network is a sequential learning model based on neural networks \cite{jordan-1990}. The key differentiating point of a recurrent network as opposed to a neural network is the use of hidden states \cite{elman-1990}. The hidden states of a recurrent network, allows us to retain information as the network is trained over a sequence.

The GRU is an extension of RNN and alleviates the problem of vanishing/exploding gradients \cite{cho-2014}. This is done by feeding through a combination of the previous state and the current state. The GRU is a variation of a popular RNN called the LSTM (Long short term memory), and uses gates (activation functions) to transform the input. The GRU reduces the complexity of the LSTM by reducing the number of gates that the information passes through. 

In this work, we use the Gated Recurrent Unit (GRU) as our RNN unit as shown in Figure \ref{fig:network}. The $\sigma$ and tanh are the sigmoid and hyperbolic tangent activation functions as seen in \cite{goodfellow-2016}. The linear structure of the GRU output allows us to generalize the payoff and delta of the option with better accuracy than non-linear structure of the LSTM. Past studies have shown that linear structure of the output preserves the most information as observed in \cite{chen-2019} and studied in \cite{he-2016}. We found that the non-linear output of LSTMs can force the output to achieve extreme negative values, which is not ideal for pricing. Another reason is that it has less parameters to optimize as it only has three gates over the LSTMs four gates. This reduces the number of operations we need to perform and the number of parameters that need to be optimized, which further increases our computational efficiency. We refer interested readers to \cite{cho-2014} for a more detailed review of GRUs and \cite{hochreiter-1997} for a review of LSTMs. 

\begin{figure}[!htbp]
    \centering
    \includegraphics[width=\textwidth]{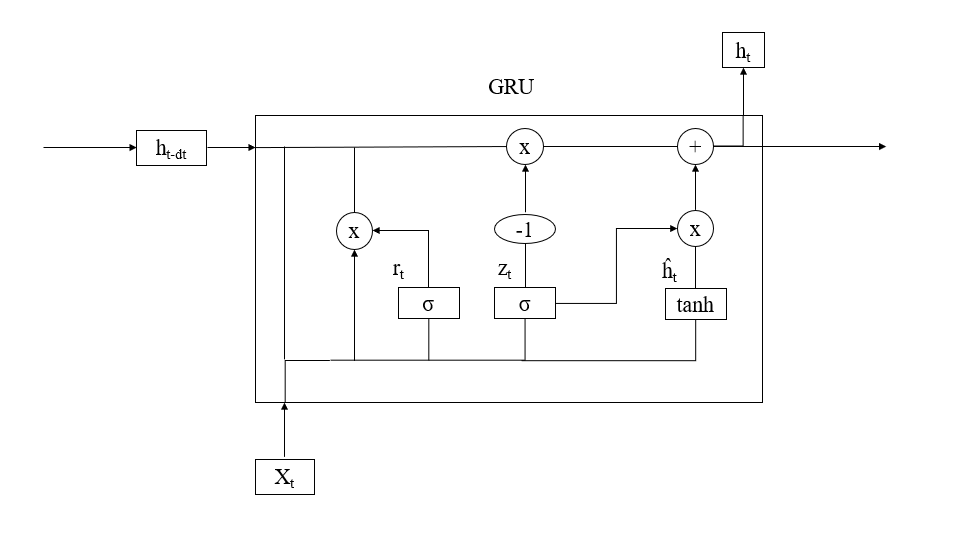}
    \caption{Typical GRU cell with gates. Operations in circles are element wise operations.}
    \label{fig:gru_scheme}
\end{figure}

\subsection{Activation function selection for output and embedding layers}

The choice of suitable activation functions leads to better predictive capabilities for any network \cite{goodfellow-2016}. As shown in Figure \ref{fig:network}, our network adds two linear layers to the output of the deep recurrent network, a linear embedding layer and a linear output layer. We denote the embedding layer as $\mathbf{e}^n$, $\mathbf{o}_p^n$ for the output layer of the price approximation and $\mathbf{o}_\Delta^n$ for the output layer of the delta approximation. For continuation prices the natural activation function used for the linear embedding layer and output layer is the popular rectilinear unit (ReLU) activation function \cite{goodfellow-2016}. The ReLU function has the form
\begin{equation*}
    \textnormal{ReLU}(x) = \max\{x,0\},
\end{equation*}
which has the same form as our payoff function $f(x)$. However, the ReLU activation function suffers from vanishing gradients if a network is trained for too many iterations. Thus instead, we use the softplus activation function given by
\begin{equation*}
    \textnormal{softplus}(x,\kappa)=\frac{1}{1+e^{-\kappa x}}.
\end{equation*}
This activation function approaches the ReLU activation as $\kappa \rightarrow \infty$. This gives us a good generalization of the payoff and it's exponential behaviour as it evolves through time. For the output layers of the network that approximates the option deltas we use a sigmoid activation function. This selection is natural as the delta is the derivative of the value of the option with respect to price.

For the embedding layer, we use the swish activation function given by
\begin{equation}
    \textnormal{swish}(x) = x\hat\sigma(x),
    \label{eq:swish}
\end{equation}
where $\hat\sigma(x)$ is the sigmoid activation function given by
\begin{equation}
    \hat\sigma(x) = \frac{1}{1+e^{-x}},
    \label{eq:sigmoid}.
\end{equation}

The swish activation function is empirically shown to perform better than the ReLU function, as shown in \cite{rama-2017} and helps our method reach more accurate solutions.

\subsection{Feature selection and network summary} \label{fsns}

Feature selection in neural network has a significant impact on the accuracy of the network model \cite{goodfellow-2016}, the correct choice of inputs in any regression problem allows for good predictions and approximations. We use the asset price, and $g(\mathbf{s})$ from \eqref{eq:payoff}. We found that $g(\mathbf{s})$ performs better than using the payoff, $f(\mathbf{s})$ due to the additional information it carries. We concatenate the inputs but we do not perform normalization. Our concatenated input matrix is
\begin{equation*}
    \mathbf{x} = [\mathbf{s},g(\mathbf{s})]^{T}\ \in\ \mathbb{R}^{M \times 2d}.
\end{equation*}
Unlike a traditional feedforward network the Recurrent network uses the hidden state to carry information through time. In practical applications of Recurrent networks the hidden state is initialized as zero or given some random numbers following a uniform or normal distribution. In our approach, we initialize the hidden state using $\mathbf{y}^N$ for $\mathcal{F}(\mathbf{s};\Omega)$ and $\nabla \mathbf{y}^N$ for $\mathcal{G}(\mathbf{s};\Theta)$. We pad the hidden state with itself to match the size of the input and for each timestep $n$, we redefine the hidden states as $\mathbf{y}^n$ for $\mathcal{F}(\mathbf{s};\Omega)$ and $\nabla \mathbf{y}^n$ for $\mathcal{G}(\mathbf{s};\Theta)$.

To summarize sections (\ref{rns}) - (\ref{fsns}), we can express our price approximating network for $n=N-1,...,0$ by Algorithm \ref{alg:price_network} and similarly we approximate the delta of the option we use Algorithm \ref{alg:delta_network}.

\begin{algorithm}[!htbp]
    \caption{Continuation price network using a single GRU unit}
    \label{alg:price_network}
    \begin{algorithmic}
        \Require{$\mathbf{h}^0,\mathbf{W}_e^n,\mathbf{W}_o^n,\mathbf{c}^{n+1}$}
        \For{$n=N-1,...,0$}
            \Let{GRU}{\eqref{eq:gru_eq_start}-\eqref{eq:gru_eq_end}}
            
            \Let{$\mathbf{e}^n$}{$ \textnormal{swish}(\mathbf{W}_e^{n\top}\textnormal{GRU} + b_e)$}
            
            \Let{$\mathbf{o}_p^n$}{$ \textnormal{softplus}(\mathbf{W}_o^{\top}\mathbf{e}^{n} + b_o)$}
            \Let{$\mathcal{F}(s;\Omega)$}{$\mathbf{o}_p^n$}            \Let{$\mathbf{y}^n$}{$(1-\alpha)e^{-r \Delta t}\mathbf{c}^{n+1} + \alpha \mathcal{F}(\mathbf{s};\Omega)$}
        \EndFor
    \end{algorithmic}
\end{algorithm}
\begin{algorithm}[!htbp]
    \caption{Delta network using a single GRU unit}
    \label{alg:delta_network}
    \begin{algorithmic}
        \Require{$\mathbf{h}^0_\Delta, \mathbf{W}_e^n,\mathbf{W}_o^n,\nabla \mathbf{c}^{n+1}$}
        \For{$n=N-1,...,0$}
            \Let{ GRU}{\eqref{eq:gru_eq_start}-\eqref{eq:gru_eq_end}}
            
            \Let{$\mathbf{e}^n$}{$ \text{swish}(\mathbf{W}_e^{n\top}\textnormal{GRU} + b_e)$}
            
            \Let{$\mathbf{o}_\Delta^n$}{$ \hat\sigma(\mathbf{W}_o^{\top}\mathbf{e}^{n} + b_o)$}
            \Let{$\mathcal{G}(s;\Theta)$}{$\mathbf{o}_\Delta^n$}            \Let{$\nabla\mathbf{y}^n$}{$(1-\beta)e^{-r \Delta t}\nabla\mathbf{c}^{n+1}\frac{\mathbf{s}^{n+1}}{\mathbf{s}^{n}} + \beta \mathcal{G}(\mathbf{s};\Theta)$}
        \EndFor
    \end{algorithmic}
\end{algorithm}

\subsection{Training and Evaluation}

Consider training a recurrent network, recurrent networks can be trained with a full sequence of events at once or it can be trained for each timestep. For clarity, in this paper we will look at the training at each timestep, $n$ and all simulations $m$.

\textbf{Case: $\mathbf{n = N}$.}
At the final timestep $N$, we let $y^N(\mathbf{S}^N) = f^N(\mathbf{S}^N)$ and $\nabla y^N(\mathbf{S}^N) = \nabla f^N(\mathbf{S}^N)$. We consider a smoothed-payoff function given by using a scaled softplus activation function with a user defined parameter $\kappa$ given by
\begin{equation}
    y^N(\mathbf{S}^N) = f^N_\kappa(\mathbf{S}^N) \equiv \frac{1}{\kappa}\ln(1+e^{\kappa g(\mathbf{S}^N)}),
\label{eq:smoothing}
\end{equation}
where $g(\mathbf{S}_m^N)$ is the same as in \eqref{eq:payoff}, let $\nabla g(\mathbf{S}^N) = [\frac{\partial g(\mathbf{S}^N)}{\partial \vv{S}_{1}^N},...,\frac{\partial g(\mathbf{S}^N)}{\partial \vv{S}_{d}^N}]$, then the derivative of the smoothed payoff is
\begin{equation}
    \nabla y^N(\mathbf{S}^N) = \nabla f^N_\kappa(\mathbf{S}^N) \equiv \hat{\sigma}(\kappa g(\mathbf{S}^N)) \nabla g(\mathbf{S}^N).
\end{equation} 

The hidden states are initialized as $f_\kappa^N(\mathbf{S}^N)$ for $\mathcal{F}(\mathbf{S}^N;\Omega)$ and $\nabla f_{\kappa}^N(\mathbf{S}^N)$ for $\mathcal{G}(\mathbf{S}^N;\Theta)$. The initial value of the continuation price is given by $c^N(\mathbf{S}_m^N) = f_\kappa^N(\mathbf{S}_m^N)$, which is the payoff at $t=T$. Similarly, the initial delta is given by $\nabla c^N(\mathbf{S}_m^N) = \nabla f_\kappa^N(\mathbf{S}_m^N)$. The initial hidden states are given by
\begin{equation*}
    \mathbf{h}^N = [f_{\kappa}^{N}(\mathbf{S}^N)],
    \label{eq:hidden_net_price}
\end{equation*}
for the pricing network and
\begin{equation*}
    \mathbf{h}^N = [\nabla f_{\kappa}^{N}(\mathbf{S}^N)],
    \label{eq:hidden_net_delta}
\end{equation*}
for the delta network.

\textbf{Case: $\mathbf{n=N-1,...,0}$.}
We use the deep RNN framework to compute the continuation price and delta of the option at timesteps $n=N-1,...,0$. We construct the training input as
\begin{equation*}
    \mathbf{X}^n = [\mathbf{S}^n,g(\mathbf{S}^n)]^T,
    \label{eq:input_net}.
\end{equation*}
The hidden state are updated as in standard recurrent networks, to carry the information to the next time step. Finally, we update the continuation price and delta of the option. For the optimal stopping time $\tilde{n}$, the continuation price at $n$ is given by
\begin{equation*}
    c^n= e^{-r(\tilde{n} - n)\Delta t}f^{\tilde{n}}
\end{equation*}
Similarly, we want to calculate the delta and the delta is updated using
\begin{equation*}
    \nabla c^n= e^{-r(\tilde{n} - n)\Delta t}\nabla f^{\tilde{n}}(\mathbf{S}^{\tilde{n}})\frac{\mathbf{S}^{\tilde{n}}}{\mathbf{S}^{n}}
\end{equation*}
We train the network in Algorithm \ref{alg:price_network} to approximate $y(\mathbf{S};\Omega^*)$ and the network in Algorithm \ref{alg:delta_network} to approximate $\nabla y(\mathbf{S};\Theta^*)$. To obtain the optimal set of parameter $\{\Omega^*,\Theta^*\}$ for each $n$, we optimize the networks over the loss function \eqref{eq:loss}, we can rewrite the loss function as a function of $\{\Omega,\Theta\}$ as
\begin{align}
    \label{eq:loss_net}
    \mathcal{L}(\Omega,\Theta)&=\mathbb{E}[|\mathbf{c}^{n}-y(\mathbf{S};\Omega)|^2] \\
    &+ \Delta t \mathbb{E}[|  \tilde\sigma \nabla \mathbf{c}^{n} - \tilde\sigma\nabla y(\mathbf{S};\Theta)|^2]. \nonumber
\end{align}
Minimizing the loss functions gives us the optimal set of parameters, we can express this as
\begin{equation}
    \{\Omega^*,\Theta^*\} = \textnormal{arg}\min\limits_{\{\Omega,\Theta\}}\{\mathcal{L}(\Omega,\Theta)\}.
    \label{eq:weight_opt}
\end{equation}
To solve \eqref{eq:weight_opt} an optimization algorithm like Adam \cite{kingma-2014} is used. The optimal parameter set naturally gives us the approximate solutions, $\mathbf{y}^{n*} = y^n(\mathbf{S}^n;\Omega^*)$, and $\nabla \mathbf{y}^{n*}= \nabla y^n(\mathbf{S}^n;\Theta^*)$ which solves problem \eqref{eq:optprob}.\\
The full algorithm for training is presented in Algorithm \ref{alg:training}, and we let $\mathcal{O}$ be the optimizer that minimizes \eqref{eq:loss} and we let $\mathbf{h}_\Delta^n$ be the hidden state of the delta approximating network. The $\max$ function is the element wise maximum between its inputs. To simplify some notation we let $\mathbf{y}_m^n \equiv y^n(\mathbf{S}^n;\Omega)$, and let $\nabla \mathbf{y}_m^n \equiv \nabla y^n(\mathbf{S}^n;\Theta)$. 
\begin{algorithm}[!htbp]
    \caption{Training the recurrent network}
    \label{alg:training}
    \begin{algorithmic}
        \Require{$\mathbf{S}^n,\mathcal{O}$}
        \State{initialize $\mathbf{c}$,$\nabla \mathbf{c}$,$\mathbf{y}$, and $\nabla \mathbf{y}$ of shape ($N+1$,$d$,batch size)}
        \For{$\mathbf{S}^n\textnormal{ in Batch}$}
            \State{$n = N$}
            \Let{$\mathbf{c}^N$,$\nabla \mathbf{c}^N$}{$f^N(\mathbf{S}^N)$,$\nabla f^N(\mathbf{S}^N)$}
            \Let{$\mathbf{y}^N$,$\nabla \mathbf{y}^N$}{$f^N(\mathbf{S}^N)$,$\nabla f^N(\mathbf{S}^N)$}
            \Let{$\{\mathbf{h}^N,\mathbf{h}_{\Delta}^N\}$}{$\{f^{N}(\mathbf{S}^N),\nabla f^{N}(\mathbf{S}^N)\}$}
            \For{$n=N-1,...,0$}
                \Let{$\mathbf{X}^n$}{$[\mathbf{S}^n,g(\mathbf{S}^n)]^\top$}
                \Let{$\tilde{n}$}{$\max\limits_{\tilde{n}\in[N-1,...,n+1]}\mathbb{E}[f(\mathbf{S})]$}
                \Let{$\mathbf{c}^n$,$\nabla\mathbf{c}^n$}{$e^{-r(\tilde{n}-n)\Delta t}f^{\tilde{n}}(\mathbf{S}^{\tilde{n}})$,$e^{-r(\tilde{n}-n)\Delta t}\nabla f^{\tilde{n}}(\mathbf{S}^{\tilde{n}})\frac{S^{\tilde{n}}}{S^{n}}$}
            \EndFor
            \Let{$\mathbf{y}^{N-1,...,0}$}{Algorithm \ref{alg:price_network}}
            \Let{$\nabla \mathbf{y}^{N-1,...,0}$}{Algorithm \ref{alg:delta_network}}
            \Let{$\{\mathbf{y}^{*},\nabla \mathbf{y}^{*}\}$}{$\mathcal{O}(\mathcal{L}(\mathbf{y},\mathbf{c},\nabla \mathbf{y},\nabla \mathbf{c}))$}
        \EndFor
    \end{algorithmic}
\end{algorithm}

Once training is complete, we evaluate the trained network, with a forward pass of the network. Similar to the training phase, the recurrent networks can evaluate the full sequence of events at once. The initial setup at time $n=N$ is the same and will not be repeated.

In the evaluation phase, we directly construct the evaluation input \eqref{eq:input_net}. With the hidden state at $n=N$ we use the trained network Algorithm \ref{alg:price_network} to approximate $y^n(\mathbf{S}^n;\Omega^*)$ and we use the trained network Algorithm \ref{alg:delta_network} to approximate $\nabla y^n(\mathbf{S}^n;\Theta^*)$ for $n=N-1,...,0$. For the evaluation phase, we use the learned price, $y^n(\mathbf{S}^n;\Omega^*)$ to construct the exercise boundary which is given by
\[
    \mathcal{E}^{n*} = 
    \begin{cases}
        1\ &\ \textnormal{if}\ f_{\kappa}^n(\mathbf{S}^n)\geq y^n(\mathbf{S}^n;\Omega^*)\\
        0\ &\ \textnormal{otherwise}.
    \end{cases}
\]
Then the value update is given by
\begin{equation}
    \mathbf{v}^n \approx \max\{f^{n}(S^{n}),y^{n}(\mathbf{S}^{n};\Omega^*)\}
    \label{eq:price_eval}
\end{equation}
and the delta is updated as
\begin{equation}
    \nabla \mathbf{v}^n \approx \nabla f^{n}(\mathbf{S}^n)\mathcal{E}^{n*} + \nabla y^{n}(\mathbf{S}^n;\Theta^*)(1-\mathcal{E}^{n*})
\end{equation}
The full algorithm for evaluation is presented in Algorithm \ref{alg:eval}.
\begin{algorithm}[!htbp]
    \caption{Evaluation of Recurrent network}
    \label{alg:eval}
    \begin{algorithmic}
        \Require{$\mathbf{S}^n,\epsilon$}
        \State{initialize $\mathbf{v}$,$\nabla \mathbf{v}$,$\mathbf{c}$,$\nabla \mathbf{c}$,$\mathbf{y}$, and $\nabla \mathbf{y}$ of shape ($N+1$,$d$,batch size)}
        \For{$\mathbf{S}^n\textnormal{ in Batch}$}
            \State{$n = N$}
            \Let{$\mathbf{v}^N, \nabla \mathbf{v}^N$}{$f^N(\mathbf{S}^N), \nabla f^N(\mathbf{S}^N)$}
            \Let{$\mathbf{c}^N$,$\nabla \mathbf{c}^N$}{$f^N(\mathbf{S}^N)$,$\nabla f^N(\mathbf{S}^N)$}
            \Let{$\mathbf{y}^N$,$\nabla \mathbf{y}^N$}{$f^N(\mathbf{S}^N)$,$\nabla f^N(\mathbf{S}^N)$}
            \Let{$\{\mathbf{h}^N,\mathbf{h}_{\Delta}^N\}$}{$\{f^{N}(\mathbf{S}^N),\nabla f^{N}(\mathbf{S}^N)\}$}
            \For{$n=N-1,...,0$}
                \Let{$\mathbf{X}^n$}{$[\mathbf{S}^n,g(\mathbf{S}^n)]^\top$}
                \Let{$\tilde{n}$}{$\max\limits_{\tilde{n}\in[N-1,...,n+1]}\mathbb{E}[f(\mathbf{S})]$}
                \Let{$\mathbf{c}^n$,$\nabla\mathbf{c}^n$}{$e^{-r(\tilde{n}-n)\Delta t}f^{\tilde{n}}(\mathbf{S}^{\tilde{n}})$,$e^{-r(\tilde{n}-n)\Delta t}\nabla f^{\tilde{n}}(\mathbf{S}^{\tilde{n}})\frac{S^{\tilde{n}}}{S^{n}}$}
            \EndFor
            \Let{$\mathbf{y}^{*[N-1,...,0]}$}{Algorithm \ref{alg:price_network}}
            \Let{$\nabla \mathbf{y}^{*[N-1,...,0]}$}{Algorithm \ref{alg:delta_network}}
            \Let{$\mathcal{E}^{*}$}{$(f(\mathbf{S}) \geq \mathbf{y}^{*})$}
           \Let{$\mathbf{v}$}{$\max\{f(\mathbf{S}),\mathbf{y}^{*}\}$}
           \Let{$\nabla \mathbf{v}$}{$\nabla f(\mathbf{S})\mathcal{E}^{*}  + \nabla \mathbf{y}^{*} (1-\mathcal{E}^{*})$}
        \EndFor
    \end{algorithmic}
\end{algorithm}

\section{Computational Costs}\label{sec:computation}

In this section, we analyze the computational cost of our proposed algorithm, and make a comparison with the DRL method. The DRL method and other feedforward methods require the neural networks to be initialized and trained at each time step. The method in \cite{chen-2019} attempts to reduce the RAM burden by pricing at each time step and storing the results. However, this means that they must write to a hard-drive at each time step which introduces additional time inefficiencies and large overhead for memory.

The complexity of training and pricing for the DRL method has been shown to be $\mathcal{O}((\frac{c_1N}{J}+c_2)NMLd^2)$ \cite{chen-2019}, where $N$ is the number of time steps, $M$ is the number of simulations, $L$ is the number of neural network layers, $d$ is the dimensions, $J$ is the number of timestep skips, and $\{c_1,c_2\}$ are constants related to the batch size and number of epochs. Note that $J$ is a small constant, typically around $2-5$. Thus the complexity is quadratic in $N$.

The memory complexity is calculated to be $\mathcal{O}(NMd)$ \cite{chen-2019}, for storing the asset prices, price approximation $y^n$ and the delta approximation $\nabla y^{n+1}$. However, this does not count the number of network weights and trainable parameters that must be computed and stored at each timestep. Considering all of the network parameters, the total memory complexity amounts to $\mathcal{O}(NMLd^2)$.

Our proposed method allows for time and memory savings as we only need to train one set of weights rather than $N/J$ sets of weights. In any neural network, the training dominates both the time and memory requirements. In our method, Algorithm \ref{alg:training} dominates the total computational time. This is because the training involves the optimization of loss function \eqref{eq:loss_net} at each step.

\subsection{Time Cost}

We will separate the analysis of two cases; one at the final time and one for the other time steps.

\textbf{Case 1: $\mathbf{n=N}$.}
At final time when $n=N$, the proposed method needs to compute $\mathcal{E}^N$, $v^N$ and $\nabla v^N$, each of which is an $M\times d$ matrix. This is performed twice, once for Algorithm \ref{alg:training} and once for Algorithm \ref{alg:eval}. For each Algorithm \ref{alg:training} and Algorithm \ref{alg:eval}, the time complexity at $N$ for $\mathcal{E}^N$ is constant. The evaluation of $f^N(S^N)$ and $\nabla f^N(S^N)$ over an array of size $M \times 2d$ both require $\mathcal{O}(Md)$ operations. Thus the time complexity requirement at $N$ is $\mathcal{O}(Md)$.

\textbf{Case 2: $\mathbf{n=N-1,...,0}$.}
At timesteps $n=N-1,...,0$, we need to construct the input $X^n$ an $M\times 2d$ matrix, initialize and optimize the Recurrent networks $y^n$ and $\nabla y^n$. The construction of the input $X^n$ and the initialization of the Recurrent network is done in constant time. Thus we will not focus on these steps. The majority of the operations required in Algorithm \ref{alg:training} is during optimization. For input size $M \times 2d$ we have five weights of size $2d \times 2d$ and an output weight of size $2d \times d$ that need to be optimized over $L$ layers. The dominant operation during optimization is multiplication of weights and inputs which requires $\mathcal{O}(MLd^2)$ operations. Thus the resulting complexity for training the Recurrent network over $N$ timesteps is $\mathcal{O}(NMLd^2)$. The time complexity for Algorithm \ref{alg:eval} is run in a forward manner through $N$ timesteps and $M$ samples. Since the forward operation is multiplicative, through each layer, we get a complexity of $\mathcal{O}(NMLd^2)$ for the price and delta approximations. Thus, the total time complexity for Algorithms \ref{alg:training} and \ref{alg:eval} is bounded by $\mathcal{O}(NMLd^2)$.

Using the python library time we can track the time requirements of the DRL method and compare it to our method, the total time used by both methods are presented in Table \ref{tab:time_memory}. The complexity of the methods were measured by taking the logarithm of timing results from Table \ref{tab:time_memory} and presented in Figure \ref{fig:time_N}. Figure \ref{fig:time_N} shows that the DRL method has a worse than linear growth with respect to $N$ and near linear growth of time in our method. As shown in Table \ref{tab:time_memory}, we can clearly see the absolute advantage we achieve by reducing the complexity by a factor of $N$, our method even at $N=100$ is already $10$ times faster than the DRL method. Also the time results did not vary significantly between dimension $2,5$ and $10$, but showed more variation when the number of timesteps was increased.

\begin{table}[!htbp]
\caption{Time and Memory comparison between methods. For this experiment the American style geometric average call option with $d=2, 5$ and $10$, priced using $M=50000$. We used a Recurrent layer size of $L=7$ for our method. The network skip number $J = 4$ and $L=7$ for the method in \cite{chen-2019}.}
\begin{subtable}{\textwidth}
    \centering
    \begin{tabular}{ccccc}
        Time steps & \multicolumn{4}{c}{Proposed method}\\
        \hline \hline
        N & Total Time (s) & Training Time (s) & Memory (MB) & Price \\ \hline
        50 & 39.48 & 29.2 & 5.62 & 6.6601\\
        100 & 76.35 & 56.73 & 5.63 & 6.4266\\
        1000 & 516.07 & 379.94 & 5.63 & 6.5961\\ \hline \\
    \end{tabular}
    \begin{tabular}{ccccc}
        Time steps & \multicolumn{4}{c}{DRL method}\\
        \hline \hline
        N & Total Time (s) & Training Time (s) & Memory (MB) & Price \\ \hline
        50 & 113.75 & 113.62 & 3701.59 & 6.6702\\
        100 & 316.3 & 316.06 & 7705.53 & 6.4817\\
        1000 & 25590.97 & 25586.84 & 151944.60 & 6.5822\\ \hline
    \end{tabular}
    \caption{d=2}
    \begin{tabular}{ccccc}
        Time steps & \multicolumn{4}{c}{Proposed method}\\
        \hline \hline
        N & Total Time (s) & Training Time (s) & Memory (MB) & Price \\ \hline
        50 & 39.32 & 28.91 & 5.62 & 5.9069\\
        100 & 78.82 & 57.77 & 5.63 & 5.9524\\
        1000 & 812.96 & 577.02 & 5.63 & 5.8701\\ \hline \\
    \end{tabular}
    \begin{tabular}{ccccc}
        Time steps & \multicolumn{4}{c}{DRL method}\\
        \hline \hline
        N & Total Time (s) & Training Time (s) & Memory (MB) & Price \\ \hline
        50 & 134.8 & 134.64 & 4548.33 & 5.9218\\
        100 & 410.92 & 410.62 & 9408.30 & 5.9518\\
        1000 & 36739.60 & 36734.08 & 168054.00 & 5.8697\\ \hline
    \end{tabular}
    \caption{d=5}
    \begin{tabular}{ccccc}
        Time steps & \multicolumn{4}{c}{Proposed method}\\
        \hline \hline
        N & Total Time (s) & Training Time (s) & Memory (MB) & Price \\ \hline
        50 & 38.55 & 28.15 & 5.62 & 5.9883\\
        100 & 76.44 & 56.05 & 5.63 & 5.9087\\
        1000 & 785.49 & 552.97 & 5.63 & 5.7964\\ \hline \\
    \end{tabular}
    \begin{tabular}{ccccc}
        Time steps & \multicolumn{4}{c}{DRL method}\\
        \hline \hline
        N & Total Time (s) & Training Time (s) & Memory (MB) & Price \\ \hline
        50 & 251.40 & 251.02 & 6847.64 & 5.9791\\
        100 & 726.46 & 725.86 & 14124.60 & 5.9053\\
        1000 & 65499.84 & 65491.55 & 215701.40 & 5.7364\\ \hline
    \end{tabular}
    \caption{d=10}
\end{subtable}
\label{tab:time_memory}
\end{table}

\begin{figure}[!htbp]
    \centering
    \begin{subfigure}{0.3\textwidth}
         \centering
         \includegraphics[width=\textwidth]{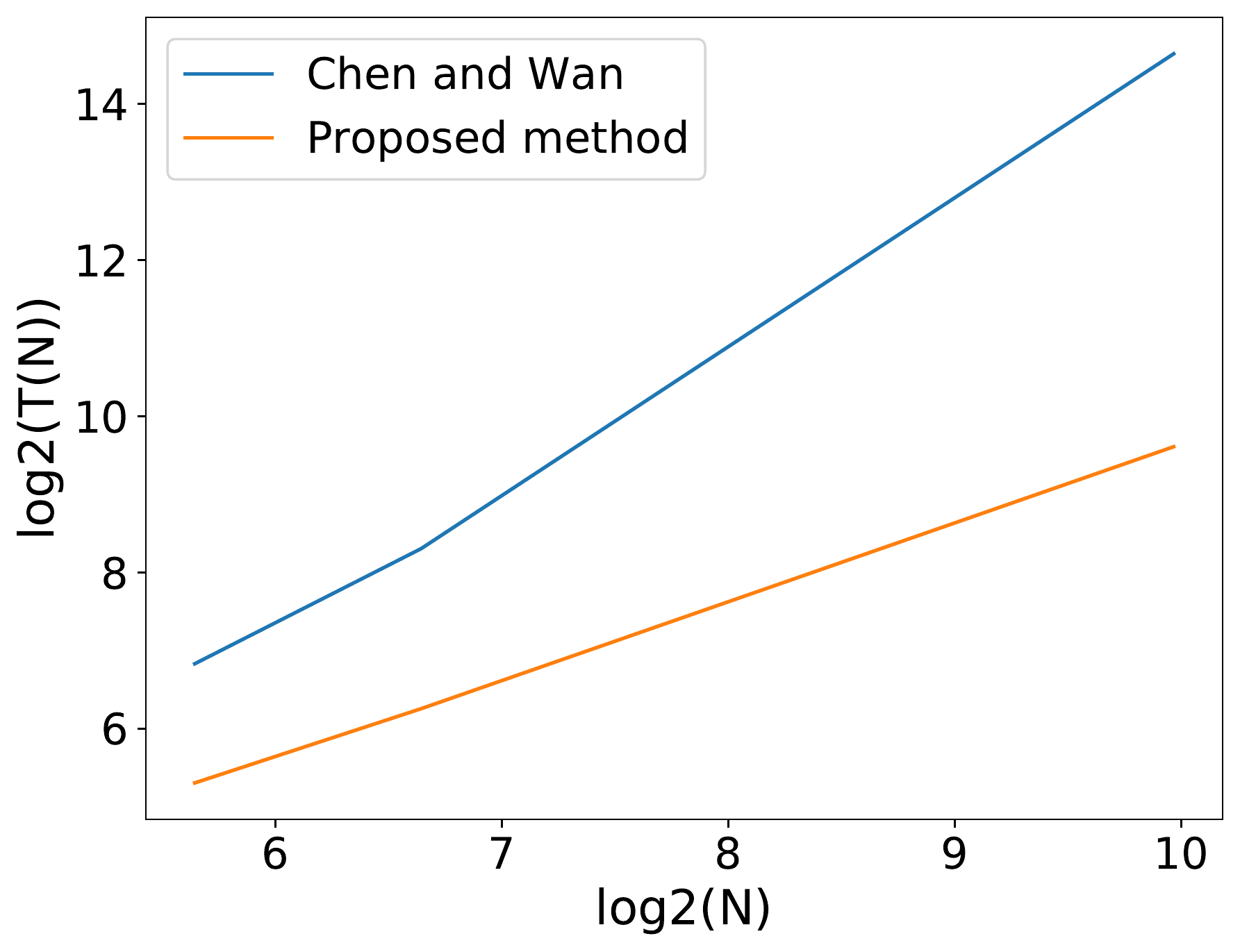}
         \caption{$d=2$}
         \label{fig:time_d2}
    \end{subfigure}
    \hfill
    \begin{subfigure}{0.3\textwidth}
         \centering
         \includegraphics[width=\textwidth]{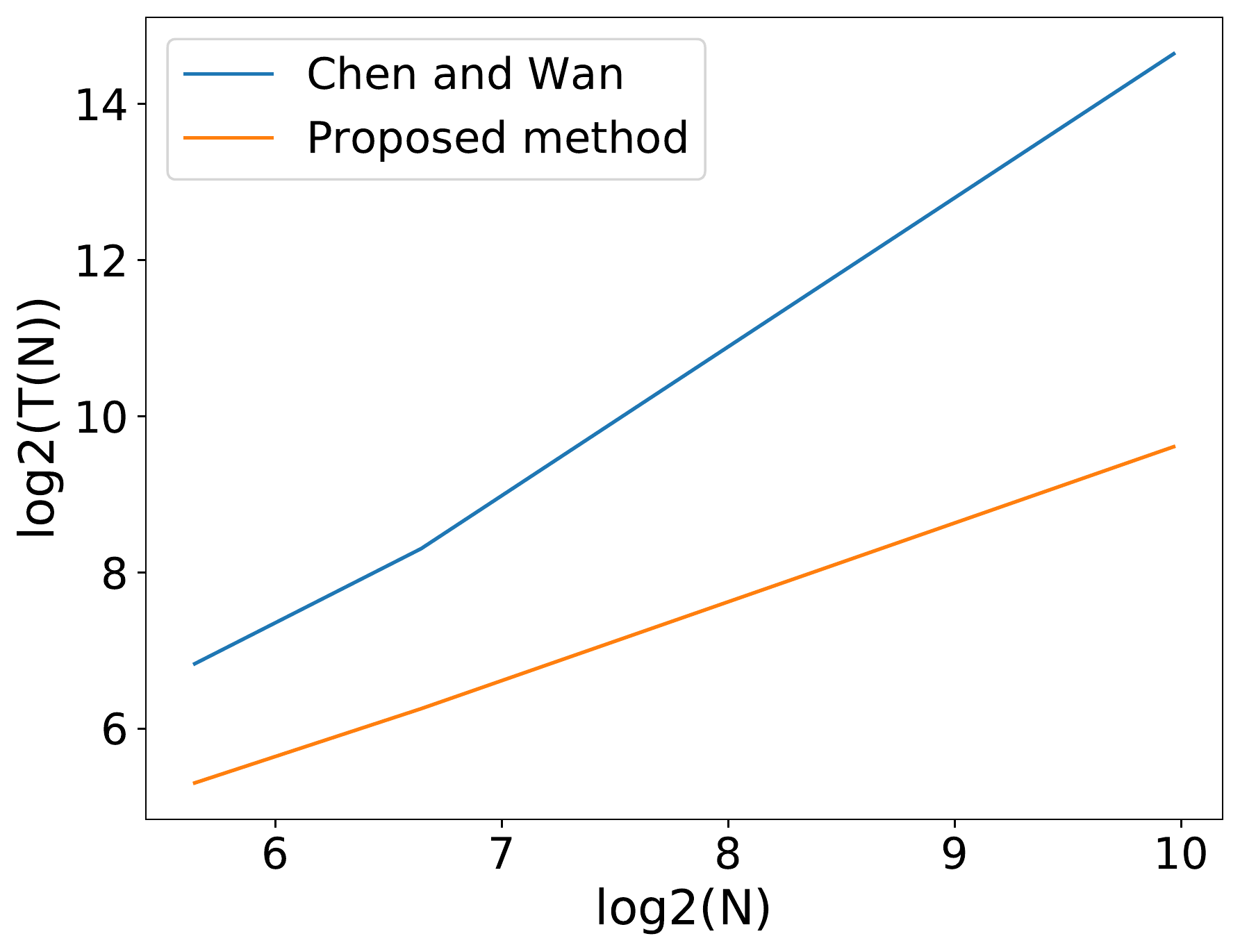}
         \caption{$d=5$}
         \label{fig:time_d5}
    \end{subfigure}
    \hfill
    \begin{subfigure}{0.3\textwidth}
         \centering
         \includegraphics[width=\textwidth]{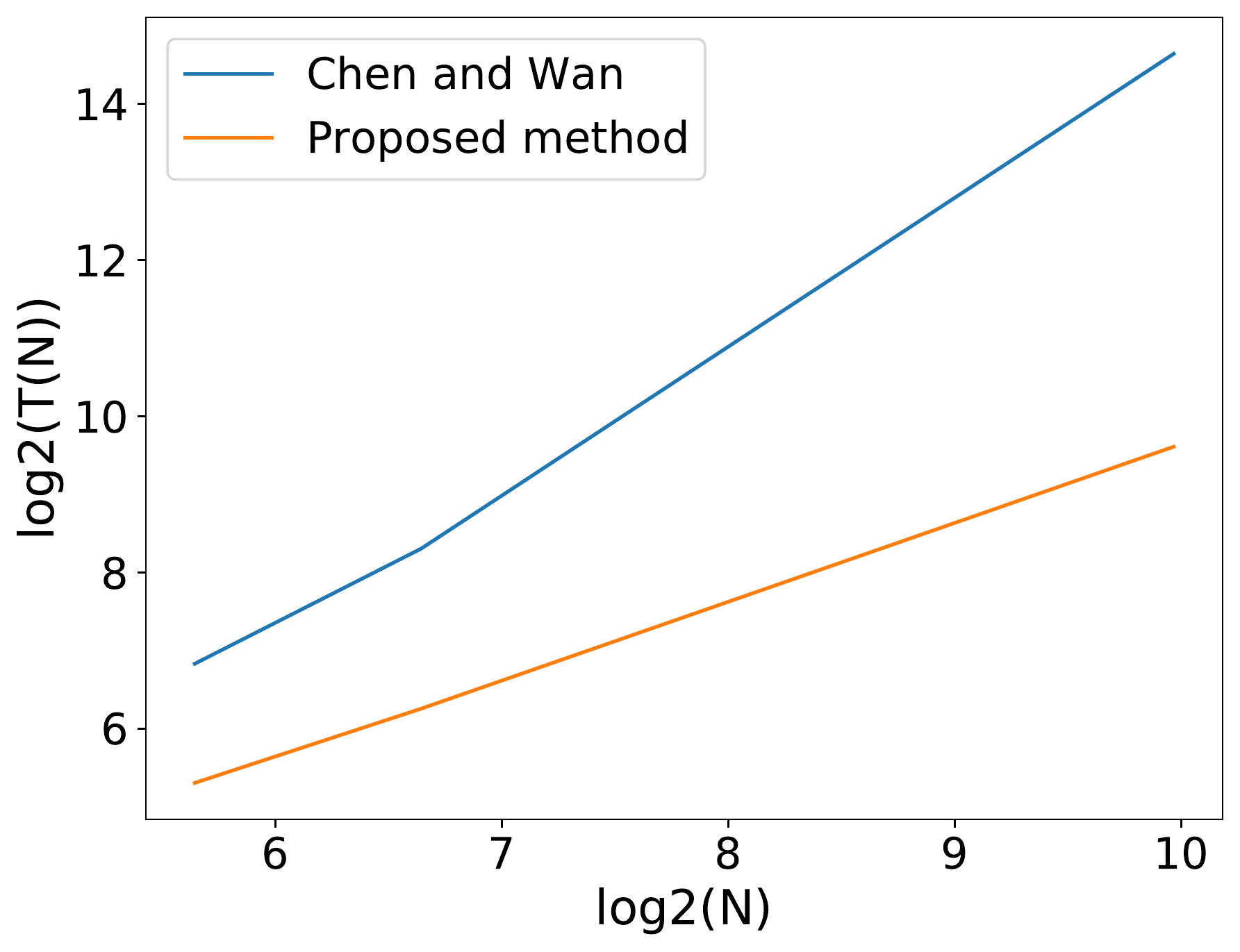}
         \caption{$d=10$}
         \label{fig:time_d10}
    \end{subfigure}
    \caption{Logarithm of the total time, T(N), against the number of time steps, $N$, for the DRL method (blue line) and the proposed method (orange line) for dimensions (a) $d=2$, (b) $d=5$ and (c) $d=10$.}
    \label{fig:time_N}
\end{figure}

\subsection{Memory Cost}

Another contribution of our work is the reduction of memory complexity by a factor of $N$ due to the use of a Recurrent network. This is a large reduction as $N$ is the dominant number in our complexity, since a large $N$ is required for more accurate solutions. 

The initialized asset prices, option payoff $f^n(S^n)$ and delta $\nabla f^n(S^n)$ requires $\mathcal{O}(NMd)$ floating points of storage, which is the same as the DRL method \cite{chen-2019}. Our methods differs in terms of the network, thus we only need to compare the storage requirements of the network. Since, the proposed method only stores the training parameters at the final timestep, it is not required to store training weights and gradient information at each timestep. Thus for the input of size $M \times 2d$, hidden layers of size $2Ld \times 2d$ and output layer of size $2d \times d$, the Recurrent networks requires $\mathcal{O}(MLd^2)$ floating points, which is constant in $N$.

As for comparison, using the python library tracemalloc we measured the total peak memory requirements of the DRL method and compared it to our method. This was done by measuring the peak memory of each function and summing the accumulated peak memories. The total peak memory used by both methods are presented in Table \ref{tab:time_memory}. The memory complexity of the methods were measured by taking the logarithm of the peak memory measured for a given number of timesteps $N$ and presented in Figure \ref{fig:memory_N}. Figure \ref{fig:memory_N} shows the near linear growth of memory in $N$ for the DRL method and a constant growth in our method. Also the memory results did not vary significantly between dimension $2,5$ and $10$, but showed more variation when the number of timesteps was increased.

\begin{figure}[!htbp]
    \centering
    \begin{subfigure}{0.3\textwidth}
         \centering
         \includegraphics[width=\textwidth]{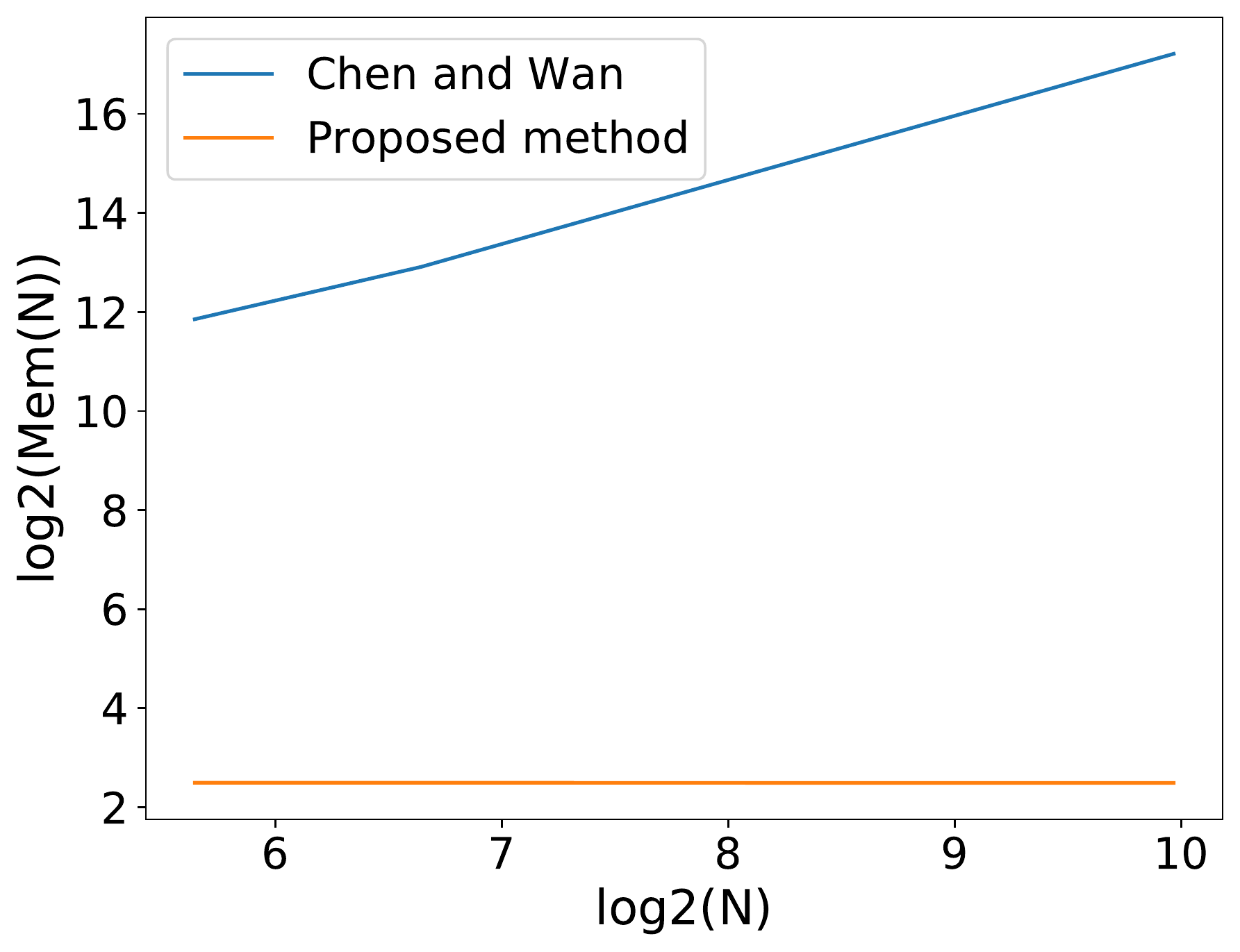}
         \caption{$d=2$}
         \label{fig:time_d2}
    \end{subfigure}
    \hfill
    \begin{subfigure}{0.3\textwidth}
         \centering
         \includegraphics[width=\textwidth]{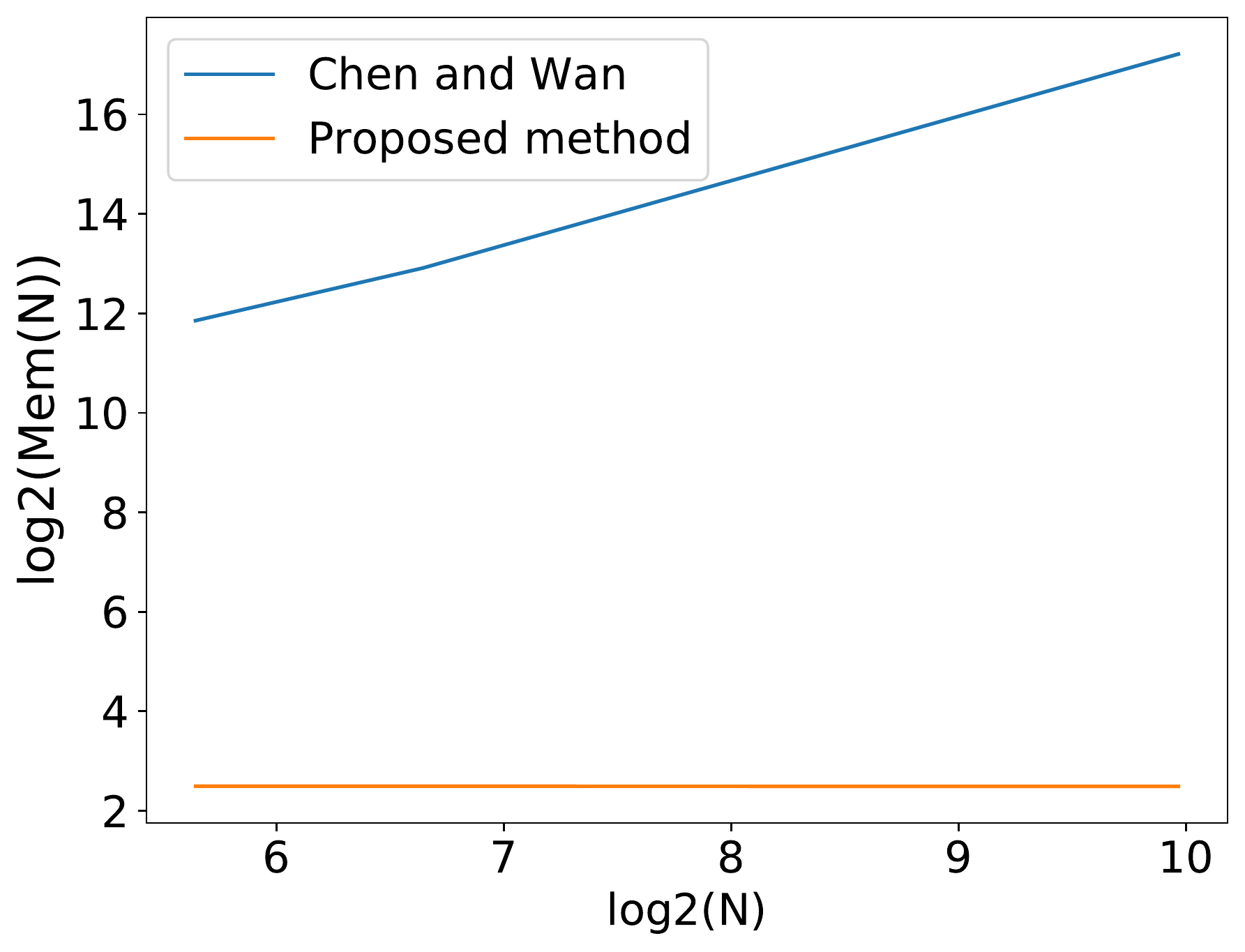}
         \caption{$d=5$}
         \label{fig:time_d5}
    \end{subfigure}
    \hfill
    \begin{subfigure}{0.3\textwidth}
         \centering
         \includegraphics[width=\textwidth]{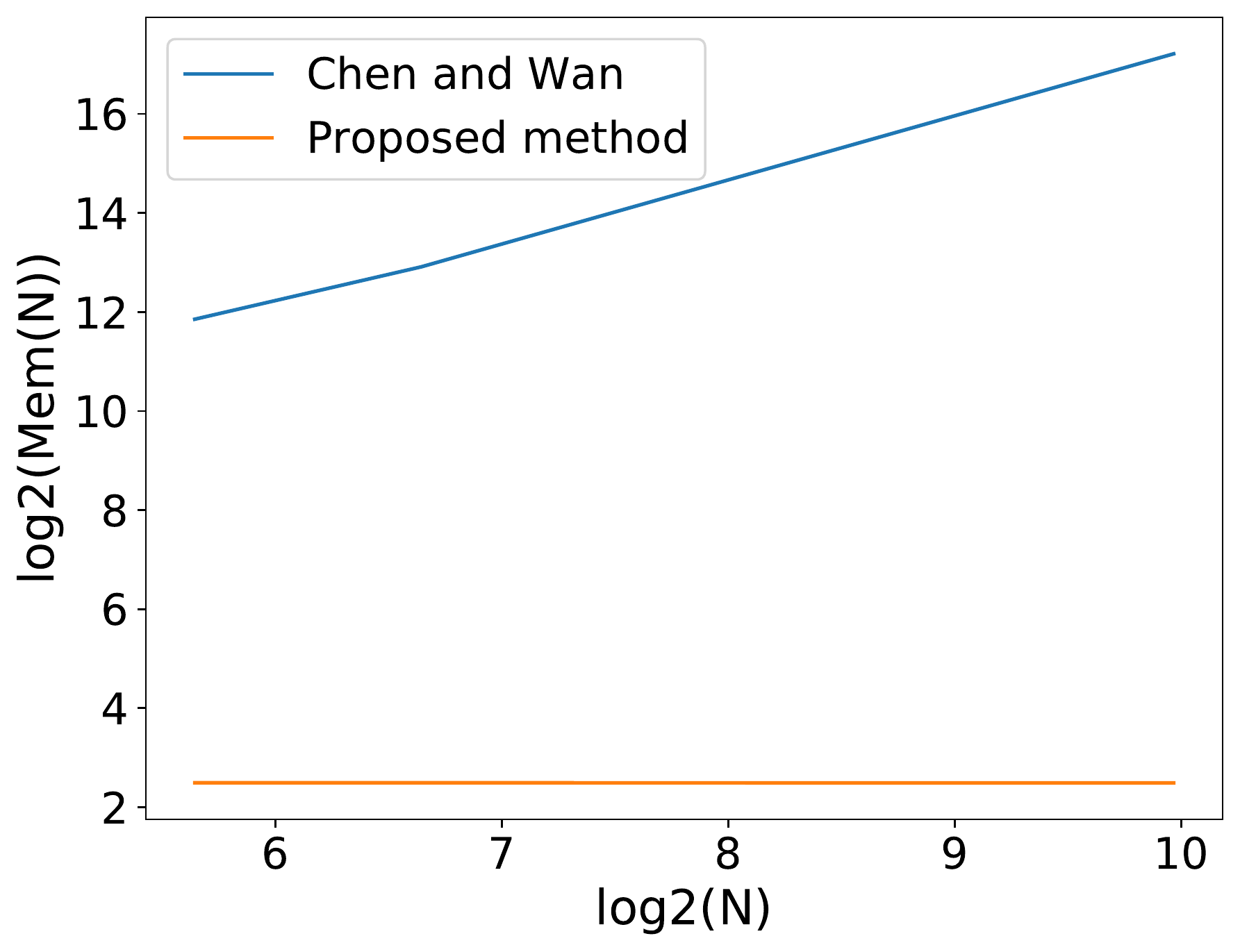}
         \caption{$d=10$}
         \label{fig:time_d10}
    \end{subfigure}
    \caption{Logarithm of the total peak memory against the number of time steps, $N$, for the DRL method (blue line) and the proposed method (orange line) for dimensions (a) $d=2$, (b) $d=5$ and (c) $d=10$.}
    \label{fig:memory_N}
\end{figure}

\section{Numerical Results}\label{sec:Experiments}

In this section, we solve the high-dimensional American option problem using our Algorithms \ref{alg:training} and \ref{alg:eval}. We compute the price $v(\vv{s}_0,0)$ and the delta $\nabla v(\vv{s}_0,0)$ at $t = 0$ for given $\vv{s}_0 = (s_{0}^1,\dots,s_{0}^d)$ where $s_{0}^1 = \dots = s_0^d = 0.9K,K$ or $1.1K$. We also show that our method can be used for delta hedging across spacetime, by demonstrating our hedging positions in a $2$-dimensional case. We use two separate types of experiments. In the first type we constrain all computational runtime such that the runtime for our proposed method and the runtime for the DRL method are the same. In the second type of experiment we run the experiment until the pricing errors produced by our method matches the pricing errors of the DRL method.

In our experiments we set the strike $K=100$. The smoothing parameter in \eqref{eq:smoothing}, is given by $\kappa = 2/\Delta t$. The network used in our Experiments is a $L=7$ deep RNN with GRU units. The batch size for our proposed method was given as $100000$. Our computation was performed using a 6GB-NVIDIA GTX 2060 GPU, a 6 core AMD Ryzen 5 3600X processor and 16GB of RAM.

In Experiment $1$, we look at the American style geometric average option. We compare the accuracy of the proposed method to the DRL method for both price and delta at time $t=0$ for $S_0 = s_0$ for all simulations, while under similar runtime constraint. In Experiment $2$, we compare the delta hedging results calculated using the price and delta computed by the proposed method and the DRL methods, we present the portfolio mean and standard deviations for dimensions $[2,5,30,100]$. In the Experiments $3$, we compare the price at $t=0$, delta at $t=0$ and delta hedging results for the American style max call option under the similar runtime constraints. Experiment $4$ and $5$ evaluates how long the proposed method needs to run in order to match the accuracy of the DRL method. For Experiments $1$, $2$ and $3$, finite difference solutions with very fine grids are used as exact solutions. We note that this is feasible only if $d \leq 3$. We make comparison results to the unsupervised method of \cite{salvador-2020}, and also look at the far out-the-money(OTM) and far in-the-money(ITM) performance of our method, the DRL method and the Longstaff-Schwartz method in Experiment $3$. In many cases we do not make a comparison to the Longstaff-Schwartz method as it has been done in \cite{chen-2019}. In addition, we remark that the comparison is not made with the other methods referenced in the introduction, such as \cite{guler-2019}, \cite{salvador-2020}, \cite{sirignano-2018}, \cite{e-2017}, \cite{beck-2018} and \cite{han-2016}. This is because the work of \cite{sirignano-2018} was compared in \cite{chen-2019}. Even though the works of \cite{e-2017}, \cite{beck-2018} and \cite{han-2016} compute European option problems using neural networks based on BSDEs, these works do not discuss the more challenging American option problems. The work of \cite{guler-2019} looks at solving the Black-Scholes and HJB equations using a BSDE approach however they look at solving the general PDE and not the American option problem. Their approach is also simulated with very limited number of timesteps, simulation size and the dimensions are limited to $d = 2$. 

We note that when finite difference solutions are available, we can evaluate the absolute and percent errors of computed prices and deltas. More specifically, denote the finite difference solutions as $v_{\text{exact}}$. Then the percent errors of the price at t = 0 is
\begin{equation}
    \frac{|v(\vv{s}_0,0) - v_{\text{exact}}|}{|v_{\text{exact}}|} \times 100\%,
    \label{eq:price_rel_err}
\end{equation}
and the percent errors of the delta at t=0 is
\begin{equation}
    \frac{||\nabla v(\vv{s}_0,0) - \nabla v_{\text{exact}}||_{L_2}}{||\nabla v_{\text{exact}}||_{L_2}} \times 100\%.
    \label{eq:delta_rel_error}
\end{equation}

In addition, we evaluate the quality of computed exercise boundaries. In the first set of experiments, we look at the exercise boundary, i.e. a point $(\vv{S}^n,t^n)$ is 'exercised' or 'continued', computed by our proposed method and the DRL method, under run time constraint. The true exercise points is determined by finite difference method without any run time constraints. Let ‘exercised’ class be the positive class, and denote the numbers of true positive, true negative, false positive and false negative samples as TP, TN, FP, FN, respectively. Then the quality of the exercise boundaries can be evaluated as
\begin{equation}
    \text{f1-score} = \frac{TP}{TP+0.5(FP+FN)}.
\label{eq:f1}
\end{equation}
The best (or worst) case of the f1-score is 1 (or 0), respectively. Another common metric to evaluate the quality of classification problems is the accuracy, however, since all of our experiments are skewed to the positive class, the f1-score would be a better metric than the accuracy \cite{chen-2019}.

\subsection{American Style Geometric Average Call Option}

We consider a $d$-dimensional American style ‘geometric average’ call option, where $\rho_{ij} = \rho$ for $i\neq j$, $\sigma_i = \sigma$ for all $i$’s, and the payoff function is given by $f(\vv{s}) = \max\{(\prod_{i=1}^{d} s_i)^{1/d} - K, 0\}$. Although, geometric average options are rarely seen in practice, they have semi-analytical solutions for benchmarking the performance of our algorithm in high dimensions, as shown in \cite{glasserman-2004}, \cite{sirignano-2018} and \cite{chen-2019}. More specifically, it is shown in \cite{glasserman-2004}, that the $d$-dimensional can be reduced to a one dimensional American call option with the variable $s' = (\prod_{i=1}^{d}s_i)^{1/d}$, where the effective volatility is $\sigma' =\sqrt{(1+(d-1))/d\sigma}$, and the effective drift, $\mu' = r-\delta +\frac{1}{2}(\sigma'^2-\sigma^2)$. Thus, by solving the equivalent one-dimensional option using finite difference method, one can compute the $d$-dimensional option prices and under special circumstances, deltas\footnote{We note that solving the equivalent one-dimensional option is not sufficient for computing the $d$-dimensional delta except at the symmetric points $s_1 = \dots = s_d$ . Interested readers can verify this by straightforward algebra.} accurately.

In the following Experiments $1$ and $2$, we evaluate the American style geometric average call option under a run time constraints. For each $s_0=90,100,110$ and $d=2,5,30,100$ we run our method and the DRL method for a fixed run time in seconds. Through this experiment we show that under time constraints, our method approximates the solution more accurately than the DRL method. We computed price of the American style geometric average call option as described in \cite{sirignano-2018}, with the parameters $\{T=2,r=0,\sigma=0.25,\rho=0.75\}$.

\subsubsection*{Experiment 1: American style geometric average call option}

In the first experiment we compare the computed prices at $t = 0$ between our proposed method and the deep residual network of Recurrent under a run time constraint; see Tables \ref{tab:2-d_geo} - \ref{tab:100-d_geo}. Each Table includes: the exact prices computed by the Crank-Nicolson finite difference method with $1000$ timesteps and $16,385$ space grid points, the prices, corresponding percent errors computed and the run time of the DRL method, the prices, the percent errors computed and run time of our proposed method. For the proposed method, the percent errors range from $0.29\% - 2.29\%$ for computed prices. Our method performs noticeably better than the DRL method at lower dimensions as total sample sized used for $d=2,5,30,100$ was $200000$. As a comparison, for the DRL method, the percent errors range from $2.04\% - 21.01\%$. To ensure the computation did not exceed the runtime limit, the timestep was set to $N=10$ and the skip parameter $J=8$. This led to a serious degredation of accuracy for the DRL method. 

\begin{table}[!htbp]
    \centering
    \caption{2-dimensional geometric average call option: computed prices and deltas at $t=0$}
    \begin{tabular}{cccccc}
        & \multicolumn{5}{c}{Finite Difference method} \\ \hline \hline
        $s_0$ & Price & \% Error & Delta &  \% Error & Run Time (s) \\ \hline
        90 & 6.7000 & -- & 19.3136 & -- & -- \\
        100 & 11.2502 & -- & 25.8041 & -- & -- \\
        110 & 16.7708 & -- & 31.3886 & -- & -- \\ \hline \\
        & \multicolumn{5}{c}{Proposed method} \\ \hline \hline
        $s_0$ & Price & \% Error & Delta &  \% Error & Run Time (s) \\ \hline
        90 & 6.6520 & 0.71348 & 19.0308 & 1.46425 & 36.39 \\
        100 & 11.0801 &  1.93303 & 25.3053 & 1.93303 & 36.74 \\
        110 & 16.8600 &  0.5313 & 31.1031 & 0.90957 & 36.21 \\ \hline \\
        & \multicolumn{5}{c}{DRL method} \\ \hline \hline
        $s_0$ & Price & \% Error & Delta &  \% Error & Run Time (s) \\ \hline
        90 & 5.2920 &  21.0117 & 23.0260 &  19.2220 & 36.01 \\
        100 & 10.0960 & 18.2110 & 30.5032 & 18.2110 & 36.25 \\
        110 & 16.2260 & 3.24612 & 35.3606 & 12.6540 & 36.09\\ \hline
    \end{tabular}
    \label{tab:2-d_geo}
\end{table}
\begin{table}[!htbp]
    \centering
    \caption{5-dimensional geometric average call option: computed prices and deltas at $t=0$}
    \begin{tabular}{cccccc}
        & \multicolumn{5}{c}{Finite Difference method} \\ \hline \hline
        $s_0$ & Price & \% Error & Delta &  \% Error & Run Time (s) \\ \hline
        90 & 6.0685 & -- & 7.3985 & -- & -- \\
        100 & 10.3173 & -- & 10.0894 & -- & -- \\
        110 & 16.1580 & -- & 12.8505 & -- & -- \\ \hline \\
        & \multicolumn{5}{c}{Proposed method} \\ \hline \hline
        $s_0$ & Price & \% Error & Delta &  \% Error & Run Time (s) \\ \hline
        90 & 6.0820 & 0.2241 & 7.3345 & 0.86504 & 37.19 \\
        100 & 10.4070 & 0.8733 & 9.9467 & 1.41436 & 38.69 \\
        110 & 15.9818 & 1.08926 & 12.3168 & 4.153145792 & 37.54 \\ \hline \\
        & \multicolumn{5}{c}{DRL method} \\ \hline \hline
        $s_0$ & Price & \% Error & Delta &  \% Error & Run Time (s) \\ \hline
        90 & 5.6789 &  6.41345 & 7.9648 & 7.6543 & 37.06 \\
        100 & 9.1800 &  11.0281 & 12.6274 & 25.155 & 38.23 \\
        110 & 13.1910 &  18.3589 & 17.9543 & 32.713 & 37.07 \\ \hline
    \end{tabular}
    \label{tab:5-d_geo}
\end{table}
\begin{table}[!htbp]
    \centering
    \caption{30-dimensional geometric average call option: computed prices and deltas at $t=0$}
    \begin{tabular}{cccccc}
        & \multicolumn{5}{c}{Finite Difference method} \\ \hline \hline
        $s_0$ & Price & \% Error & Delta &  \% Error & Run Time (s) \\ \hline
        90 & 5.7740 & -- & 1.2028 & -- & -- \\
        100 & 9.9660 & -- & 1.6593 & -- & -- \\
        110 & 15.3390 & -- & 2.0697 & -- & -- \\ \hline \\
        & \multicolumn{5}{c}{Proposed method} \\ \hline \hline
        $s_0$ & Price & \% Error & Delta &  \% Error & Run Time (s) \\
        \hline
        90 & 5.7200 & 0.93701 & 1.1944 & 0.69837 & 52.49 \\
        100 & 9.8960 & 0.70136 & 1.6307 & 1.72362 & 52.37 \\
        110 & 15.4680 & 0.8449 & 2.0405 & 1.41083 & 53.26 \\ \hline \\
        & \multicolumn{5}{c}{DRL method} \\ \hline \hline
        $s_0$ & \% Error & Delta &  \% Error & Run Time (s) \\ \hline
        90 & 5.6557 &  2.04375 & 1.1907 &  1.00599 & 52.44 \\
        100 & 8.3430 & 16.2839 & 1.9863 & 19.707 & 52.44 \\
        110 & 14.8900 &  2.92532 & 2.3296 & 12.557 & 53.17 \\ \hline
    \end{tabular}
    \label{tab:30-d_geo}    
\end{table}
\begin{table}[!htbp]
    \centering
    \caption{100-dimensional geometric average call option: computed prices and delta at $t=0$}
    \begin{tabular}{cccccc}
        & \multicolumn{5}{c}{Finite Difference method} \\ \hline \hline
        $s_0$ & Price & \% Error & Delta &  \% Error & Run Time (s) \\ \hline
        90 & 5.6352 & -- & 0.3548 & -- & --\\
        100 & 10.0658 & -- & 0.4967 & -- & -- \\
        110 & 15.2953 & -- & 0.6272 & -- & --\\ \hline \\
        & \multicolumn{5}{c}{Proposed method} \\ \hline \hline
        $s_0$ & Price & \% Error & Delta &  \% Error & Run Time (s) \\ \hline
        90 & 5.6960 & 0.93701 & 0.3566 & 0.5073 & 212.74\\
        100 & 9.9118 &  1.52993 & 0.4903 & 1.2885 & 205.08\\
        110 & 15.3499 & 0.3570 & 0.6114 & 2.51913 & 206.21\\ \hline \\
        & \multicolumn{5}{c}{DRL method} \\ \hline \hline
        $s_0$ & Price & \% Error & Delta &  \% Error & Run Time (s) \\ \hline
        90 & 5.9140 & 4.9404 & 0.4046 &  14.036 & 212.46\\
        100 & 8.9989 & 10.5993 & 0.6195 & 24.723 & 205.59\\
        110 & 16.2755 & 6.4085 & 0.6789 &  8.243 & 206.43\\ \hline
    \end{tabular}
    \label{tab:100-d_geo}
\end{table}

By design, the DRL method gives us deltas in all spacetime. Tables \ref{tab:2-d_geo} - \ref{tab:100-d_geo}\footnote{Only a single value of delta is reported for the geometric average, since our experiment was run for $S_1 = \dots = S_d$ all deltas are equal as shown in \cite{chen-2019}.} also shows the deltas at $t=0$ computed by our proposed method, along with the ones computed using the DRL method. The DRL method produces percent errors of the deltas ranged from $1.01\%$ to as high as $37.71\%$; this is due to the limited sampling available under the run time constraint. The limited sampling leads to a poorly trained network and a degradation of the exercise boundary as shown in Figure \ref{fig:geo_exc}. We see better performance from our proposed model under our time constraints, as we consistently see errors below $2.519\%$. Furthermore, we compare the exercise boundaries computed by our approach with the ones computed using the DRL method.

\begin{table}[!htbp]
    \centering
    \caption{F1-score calculated using \eqref{eq:f1} for multi-dimensional American style geometric average call option under time constraint.}
    \begin{tabular}{ccccc|cccc}
        & \multicolumn{4}{c|}{Proposed method} & \multicolumn{4}{c}{DRL method} \\ \hline \hline
        $s_0$ & $d=2$ & $d=5$ & $d=30$ & $d=100$ & $d=2$ & $d=5$ & $d=30$ & d=100 \\ \hline
        90 & 0.9465 & 0.9486 & 0.9394 & 0.9545 & 0.6563 & 0.6550 & 0.6504 & 0.6565\\
        100 & 0.9241 & 0.9193 & 0.9390 & 0.9213 & 0.6644 & 0.6651 & 0.6453 & 0.6550\\
        110 & 0.8945 & 0.8953 & 0.9004 & 0.8956 & 0.6248 & 0.6403 & 0.6324 & 0.6193\\
    \end{tabular}
    \label{tab:geo_f1}
\end{table}

Table \ref{tab:geo_f1} evaluates the f1-score of the exercise boundary classification, as defined in \eqref{eq:f1}. For the proposed method, the f1-score ranges from $0.8945 - 0.9545$. For the DRL method, the f1-score ranges from $0.6193 - 0.6651$. This illustrates the more consistent performance of our proposed approach under time constraints. 

\begin{figure}[!htbp]
    \centering
    \begin{subfigure}{0.48\textwidth}
        \centering
        \includegraphics[scale=0.4]{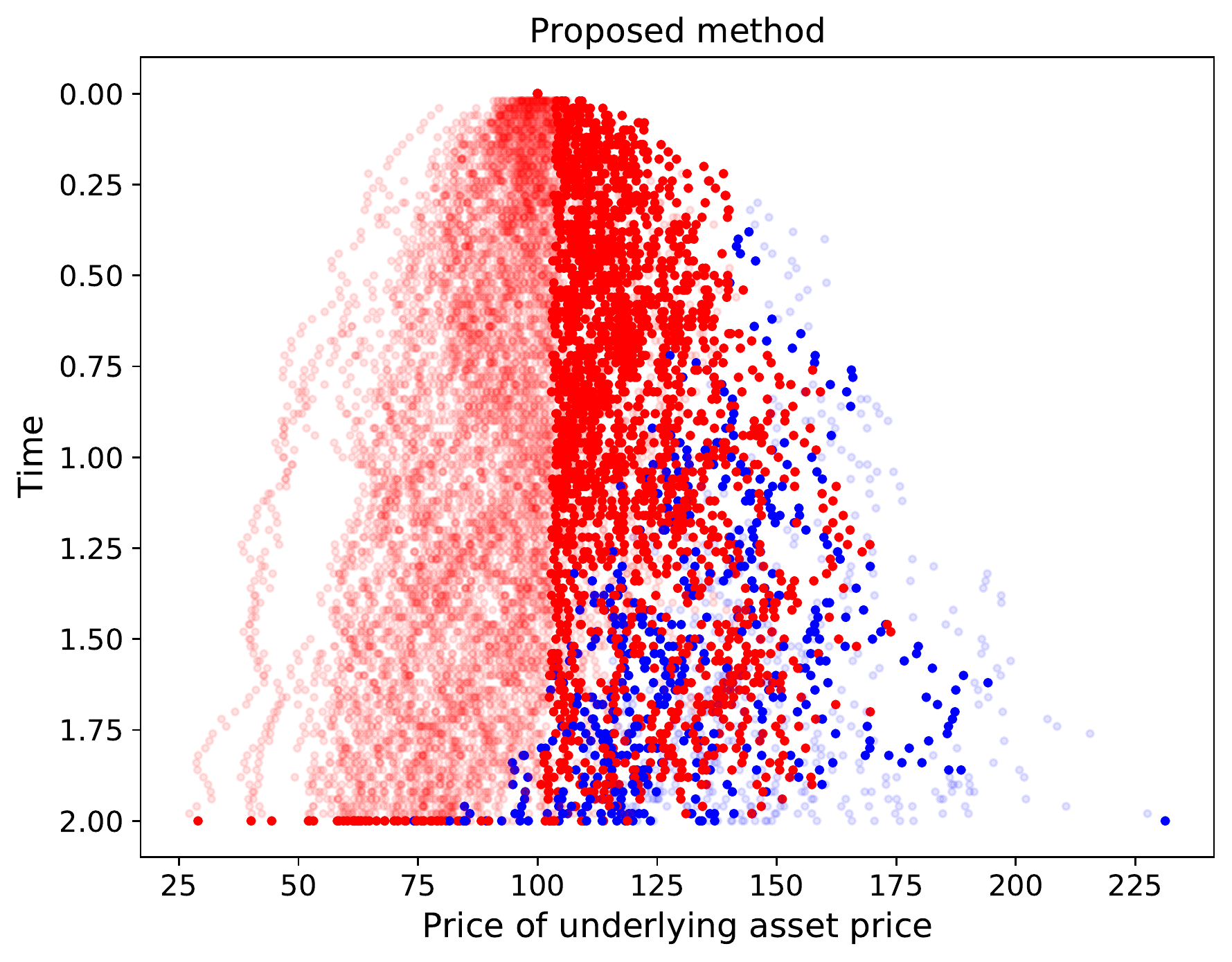}
        \caption{Proposed method: $d=5$}
    \end{subfigure}
    \begin{subfigure}{0.48\textwidth}
        \centering
        \includegraphics[scale=0.4]{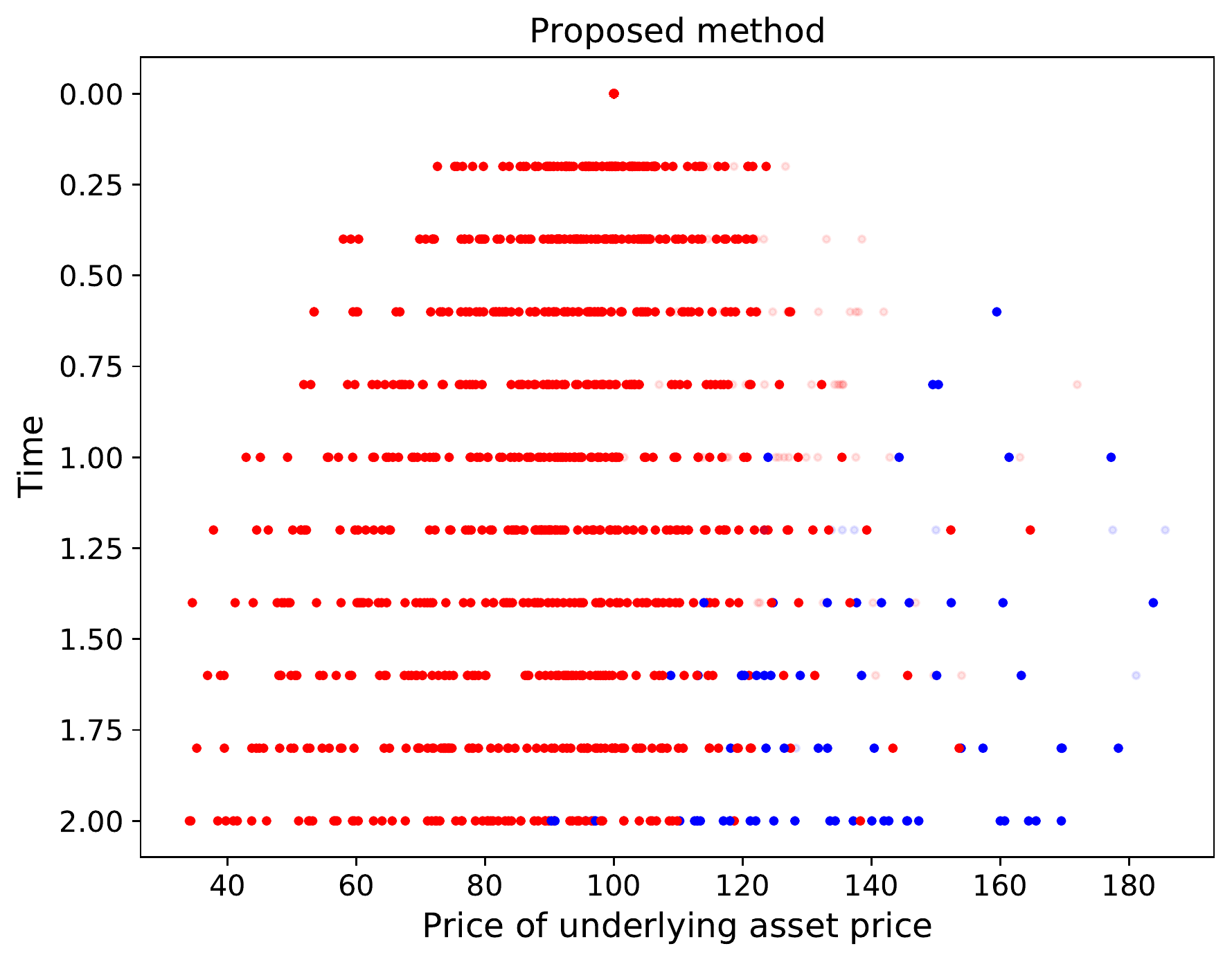}
        \caption{DRL method: $d=5$}
    \end{subfigure}
    % \vfill
    % \begin{subfigure}{0.48\textwidth}
    %     \centering
    %     \includegraphics[scale=0.4]{}
    %     \caption{Proposed method: $d=30$}
    % \end{subfigure}
    % \begin{subfigure}{0.48\textwidth}
    %     \centering
    %     \includegraphics[scale=0.4]{}
    %     \caption{DRL method: $d=30$}
    % \end{subfigure}
    \caption{Multi-dimensional geometric call options:  Comparison of exercise boundaries between the proposed neural network
    approach (left) and the method of DRL method (right). All blue points: sample points that
    should be exercised; all red points: sample points that should be continued; bold dark blue points: sample points that should be exercised
    but are misclassified as continued; bold dark red points: sample points that should be continued but are misclassified as exercised.}
    \label{fig:geo_exc}
\end{figure}

Figure \ref{fig:geo_exc} visualizes the exercise boundaries computed by
both methods. The results of our computation shows more points because the efficiency of our method allows us to use more timesteps for our simulation. We generate sample points on the entire spacetime, i.e. $\{(S_n^m, t_n)| n = 0,\dots,N; m = 1, \dots , M\}$; we classify each sample point using either our proposed method, or the DRL method. We use bold dark blue crosses to mark the sample points that should be exercised but are misclassified as continued (false positives), and bold dark red crosses to mark the ones that should be continued but are misclassified as exercised (false negatives). The plots show that under runtime constraints, the exercise boundary of our proposed method are more precise than the DRL method.

\subsubsection*{Experiment 2: Delta hedging of American style geometric average call option}

The purpose of experiment 2 is to perform delta hedging simulations over the period $[0, T]$ using our proposed method for the geometric call option. We evaluate the quality using the distribution of the relative profit and loss (P\&L) as described in \cite{forsyth-vetzal-2002} and \cite{he-2006}. Relative P\&L is given by
\begin{equation*}
    \textnormal{Relative}\ P\&L\equiv \frac{e^{-rT}\Pi_T}{V_0},
\end{equation*}
where $\Pi_T$ is the replicating portfolio at expiry $T$. For perfect hedging, the relative P\&L should be a Dirac delta function. 

Due to the discretization of time, the relative P\&L in \cite{chen-2019} is a normal distribution with mean zero and small standard deviation proportional to $\Delta t$ \cite{chen-2019}. We did not compare the hedging results with the DRL method in this experiment as the hedging intervals were different. We
emphasize that the computation of the relative P\&L must use both prices and deltas for the entire spacetime.

\begin{table}[!htbp]
    \centering
    \caption{Multi-dimensional geometric average call options: Computed means and standard deviations of the relative P\&Ls, subject to 250 hedging intervals.}
    \begin{tabular}{ccccc}
        & \multicolumn{4}{c}{Portfolio mean}\\ \hline \hline
        $s_0$ & $d=2$ & $d=5$ & $d=30$ & $d=100$ \\
        90 & -8.928e-8 & -1.064e-8 & -2.544e-9 & 7.430e-11 \\
        100 & -9.805e-8 & -5.697e-9 & -1.347e-8 & -1.931e-9 \\
        110 & -1.774e-7 & -4.014e-8 & -1.426e-8 & -4.312e-10 \\ \hline \\
    \end{tabular}
    \begin{tabular}{ccccc}
        & \multicolumn{4}{c}{Portfolio std dev} \\ \hline \hline
        $s_0$ & $d=2$ & $d=5$ & $d=30$ & $d=100$ \\
        90 & 1.718e-4 & 5.711e-5 & 1.107e-5 & 3.329e-6\\
        100 & 2.789e-4 & 9.691e-5 & 1.809e-5 & 5.446e-6\\
        110 & 3.9398e-4 & 1.458e-4 & 2.656e-5 & 8.035e-6\\ \hline
    \end{tabular}
    \label{tab:geo_portfolio}
\end{table}

Table \ref{tab:geo_portfolio} shows the means and the standard deviations of the
relative P\&Ls for all the $200000$ simulation paths, computed by our proposed method. The reported values are indeed close to zero and the standard deviations are also close to zero.

\begin{figure}[htbp]
    \centering
    \begin{subfigure}{0.3\textwidth}
        \centering
        \includegraphics[width=\textwidth]{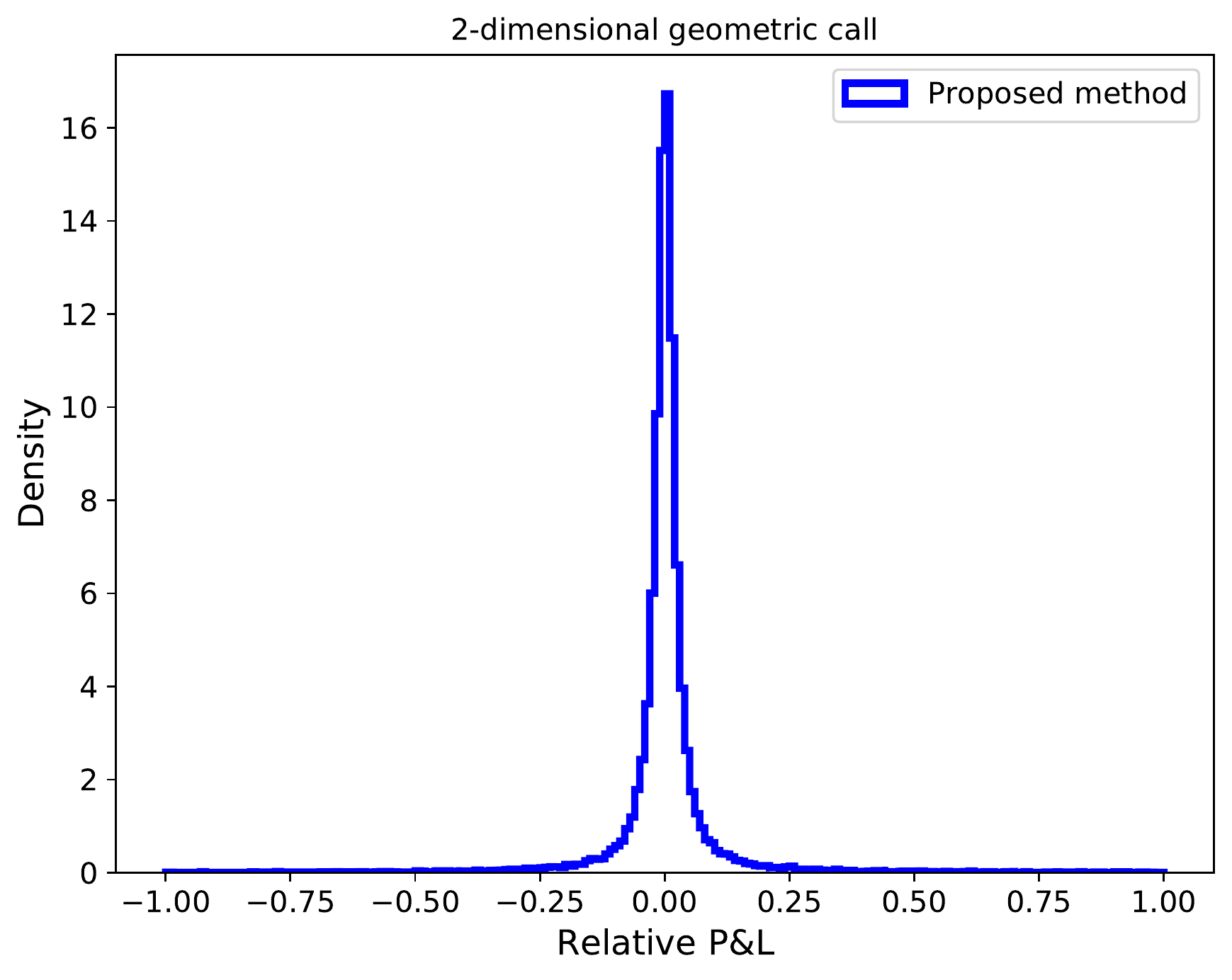}
        \caption{$d=2$}
    \end{subfigure}
    \hfill
    \begin{subfigure}{0.3\textwidth}
        \centering
        \includegraphics[width=\textwidth]{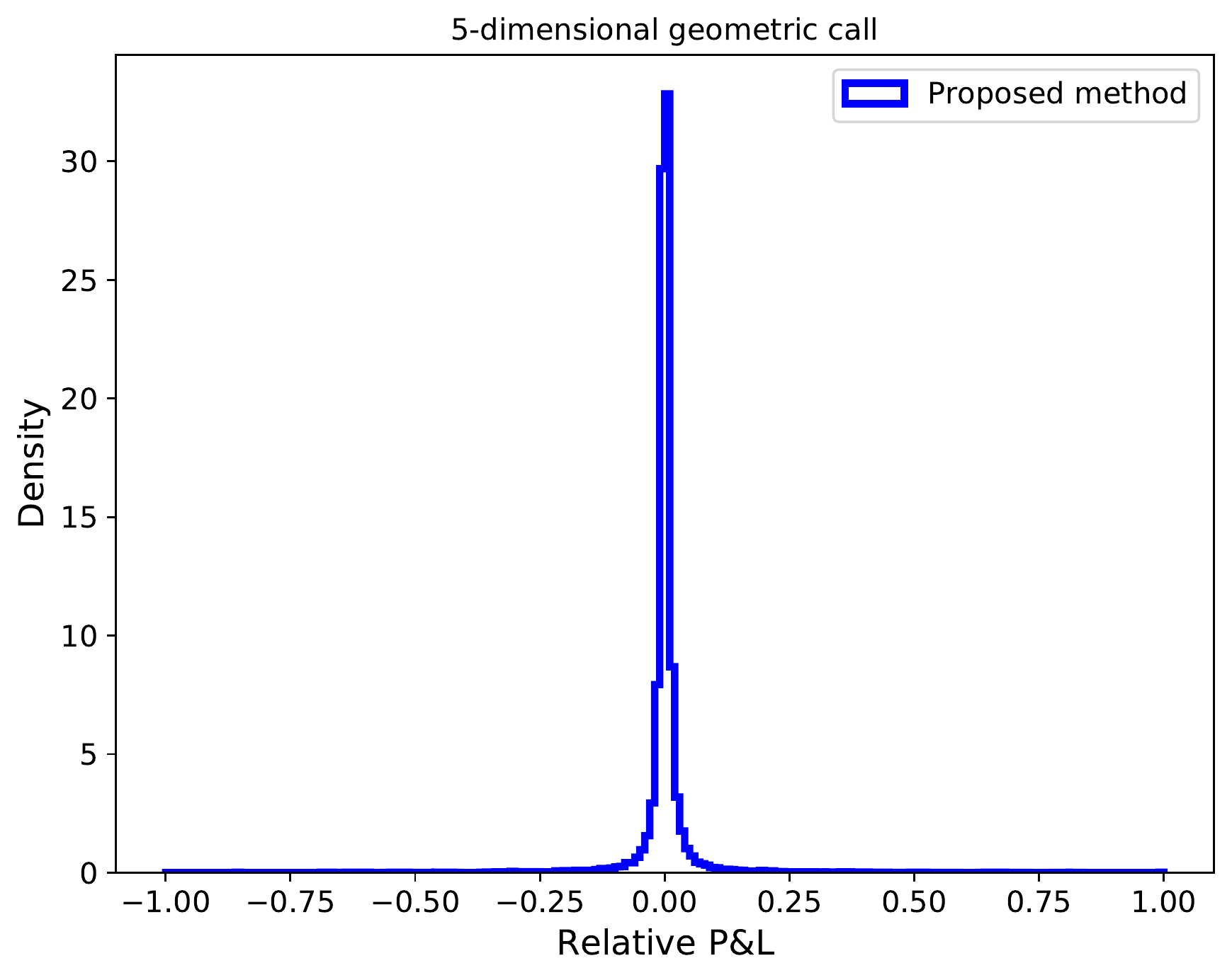}
        \caption{$d=5$}
    \end{subfigure}
    \hfill
    \begin{subfigure}{0.3\textwidth}
        \centering
        \includegraphics[width=\textwidth]{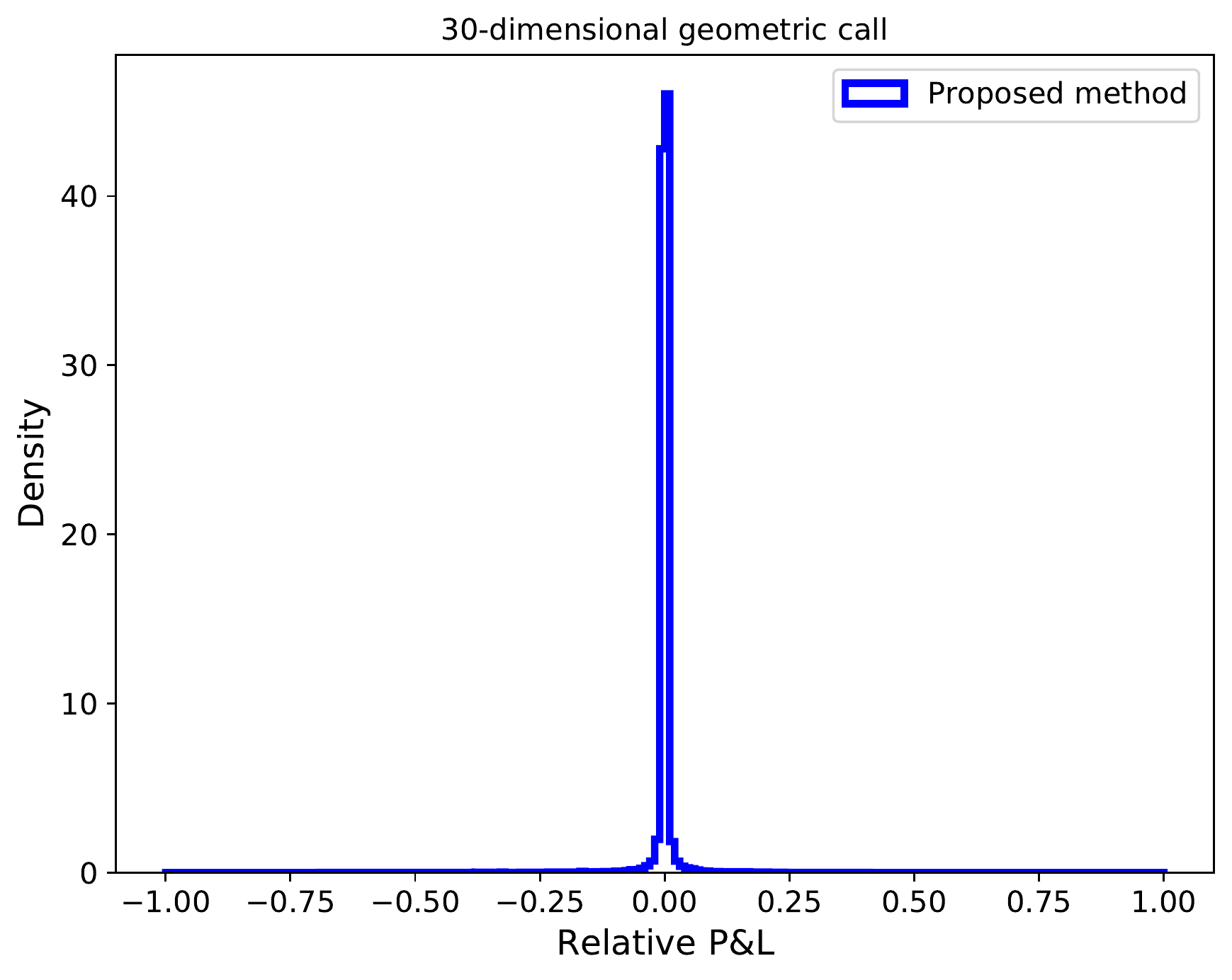}
        \caption{$d=30$}
    \end{subfigure}
    % \begin{subfigure}{0.48\textwidth}
    %     \centering
    %     \includegraphics[scale=0.35]{}
    %     \caption{$d=100$}
    % \end{subfigure}
    \caption{Multi-dimensional geometric call options: Distributions of the relative P\&Ls computed by the proposed neural network approach, subject to 250 hedging intervals. $s_0 = 100$}
    \label{fig:geo_portfolio}
\end{figure}

Figure \ref{fig:geo_portfolio} illustrates the distributions of the relative P\&Ls. The resulting distributions are near perfect, and approaches the Dirac function with zero means. This further illustrates the capabilities of our approach in determining prices and deltas across spacetime These results confirm that our methods provide a good hedge of assets across the spacetime prices and the spacetime deltas computed by the proposed method.

\subsection{American style Max Call Option}

We consider a $d$-dimensional American style ‘max’ call
option, where $\rho_{ij} = \rho$ for $i\neq j$, $\sigma_i = \sigma$ for all $i$’s, and the
payoff function is given by $f(\vv{s}) = \max\{\max\limits_{i=1,\dots,d}^{}\{s_i\} - K, 0\}$. Multi-dimensional max options are common in practical applications.

In the following Experiments $3$, $4$ and $5$, we evaluate the American style max call option. Like Experiments $1$ and $2$, the methods presented in Experiment $3$ is run such that the runtime between methods match. This is done to show that our method performs more efficiently with other payoff structures. In this experiment, we also compare our method to the unsupervised method and investigate the performance of our method far ITM and far OTM, i.e. $s_0=80,120$. Experiment $4$ and $5$ are run until the accuracy of our proposed method matches the accuracy of the DRL method. In Experiment $5$, we also look at the timing results of the Longstaff-Schwartz method that matches the accuracy of the DRL method. For $d=2$, we computed price of the American style max call option as described in \cite{broadie-1996}, we set $\{T=1, r=0.05, \sigma = 0.2, \rho = 0.3\}$. For $d=5,30$ we computed the price of the American max call option as described in \cite{firth-2005}. We set $\{T=3, r=0.05, \sigma = 0.2, \rho = 0\}$.

\subsubsection*{Experiment 3: 2-dimensional American style max call option} \label{ex:3}

The purpose of this example is to show that our method can be used to get the price and delta of different payoff functions. More specifically, we compare our method with the DRL method, using the the finite difference method solution as a ground truth. We approximate the exact prices and deltas by the Crank-Nicolson finite difference method with $1000$ timesteps and $2049 \times 2049$ space grid points as in \cite{chen-2019}. The DRL method for this experiment was run with $J=8$, with sample size $200000$, for $N=10$ timesteps. Our experiments compare the runtime of our proposed method and the DRL method given the same runtime. For $s_0 = 90,100,110$, we do not compare our method to the Longstaff-Schwartz method as this comparison for these $s_0$, since that has been done in \cite{chen-2019}. However we present additional results for $s_0=80,120$, where we make the comparison to the Longstaff-Schwartz method with a $4^\textnormal{th}$ degree polynomial.\vskip 2mm
\noindent\textit{Comparison with DRL Method}\vskip 2mm
Using our proposed method, the percent errors of the computed prices at $t = 0$ are less than $2.062\%$ (Table \ref{tab:2-d_max}), the percent errors of the computed deltas at $t = 0$ are less than $2.3665\%$ (Table \ref{tab:2-d_max}). For the same runtime, our method produces smaller errors than the DRL method, which had errors between $4.2139\%-27.6173\%$ for price and $4.3196\%-66.1151\%$ for delta.
\begin{table}[!htbp]
    \centering
    \caption{2-dimensional Max call: computed option prices and deltas at $t=0$.}
    \begin{tabular}{ccccccc}
        & \multicolumn{6}{c}{Finite Difference method} \\ \toprule
        $s_0$ & Price & Std & \% Error & Delta & \% Error & Run Time (s) \\ \colrule
        90 & 4.1941 & -- & -- & [20.2762, 20.2353] & -- & --\\
        100 & 9.6309 & -- & -- & [32.8168, 32.8905] & -- & --\\
        110 & 17.3235 & -- & -- & [42.4883, 42.4207] & -- & --\\ \botrule \\
        & \multicolumn{6}{c}{Proposed method} \\ \toprule
        $s_0$ & Price & Std & \% Error & Delta &  \% Error & Run Time (s) \\ \colrule
        90 & 4.2806 & 0.00162 & 2.062421 & [20.7641, 20.7061] & 2.3665 & 150.3\\
        100 & 9.7198 & 0.00177 & 0.923071 & [33.3599, 33.3998] & 1.6016 & 156.39\\
        110 & 17.4483 & 0.00168 & 0.72041 & [43.3701, 43.1945] & 1.9499 & 150.61\\ \botrule \\
        & \multicolumn{6}{c}{DRL method} \\ \toprule
        $s_0$ & Price & Std & \% Error & Delta &  \% Error & Run Time (s) \\ \colrule
        90 & 3.0358 & 0.3946 &  27.61737 & [27.4312, 27.3825] & 35.3041 & 150.4\\
        100 & 8.0745 & 0.3856 & 16.16048 & [39.4164, 39.3535] & 66.1151 & 156.43\\
        110 & 16.5994 & 0.3761 &  4.2139 & [40.5587, 40.6825] & 4.3197 & 150.85\\ \botrule
    \end{tabular}
    \label{tab:2-d_max}
\end{table}
\begin{figure}[!htbp]
    \centering
    \begin{subfigure}{0.48\textwidth}
        \centering
        \includegraphics[scale=0.4]{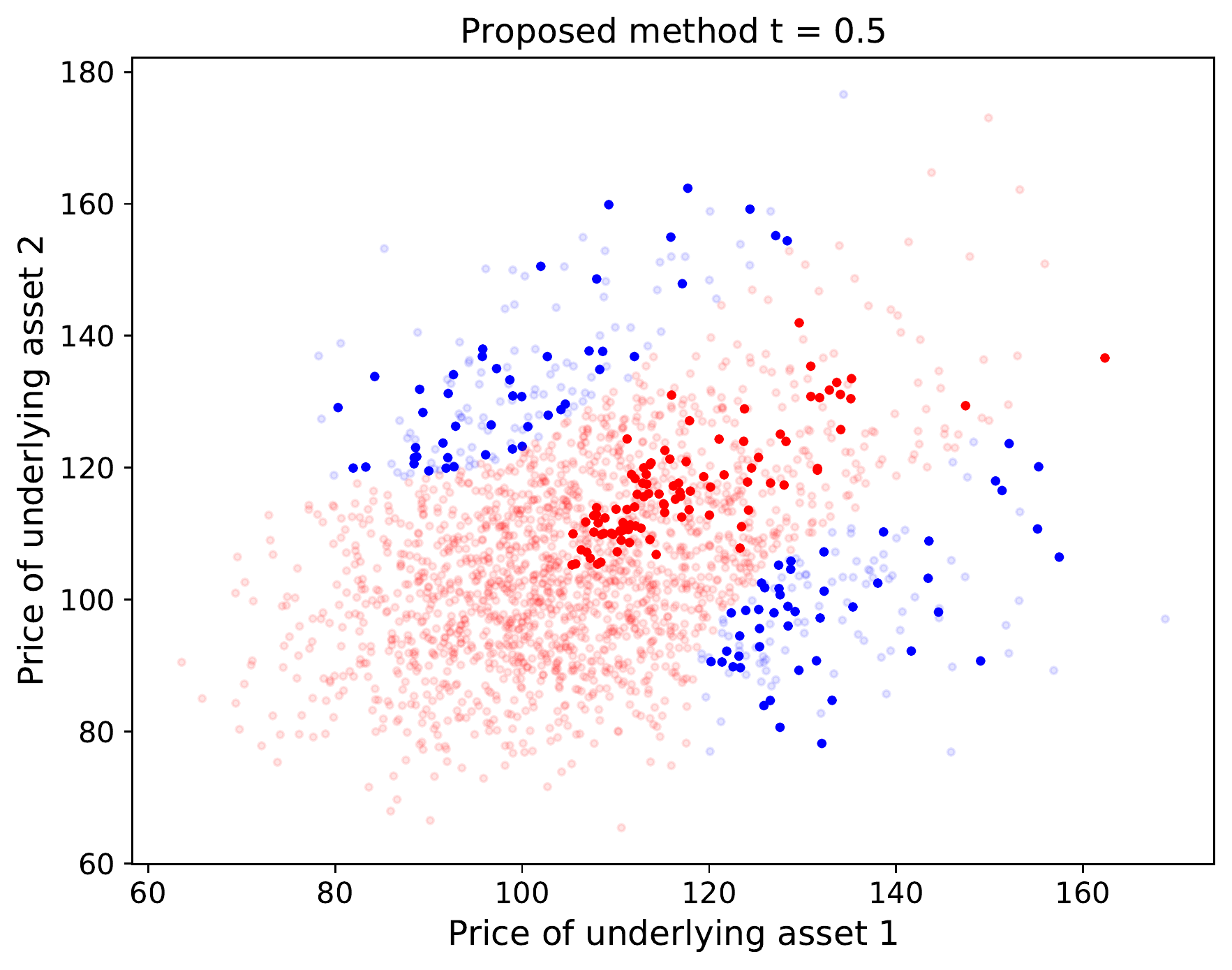}
        \caption{Proposed method:  $t=0.5$}
    \end{subfigure}
    \begin{subfigure}{0.48\textwidth}
        \centering
        \includegraphics[scale=0.4]{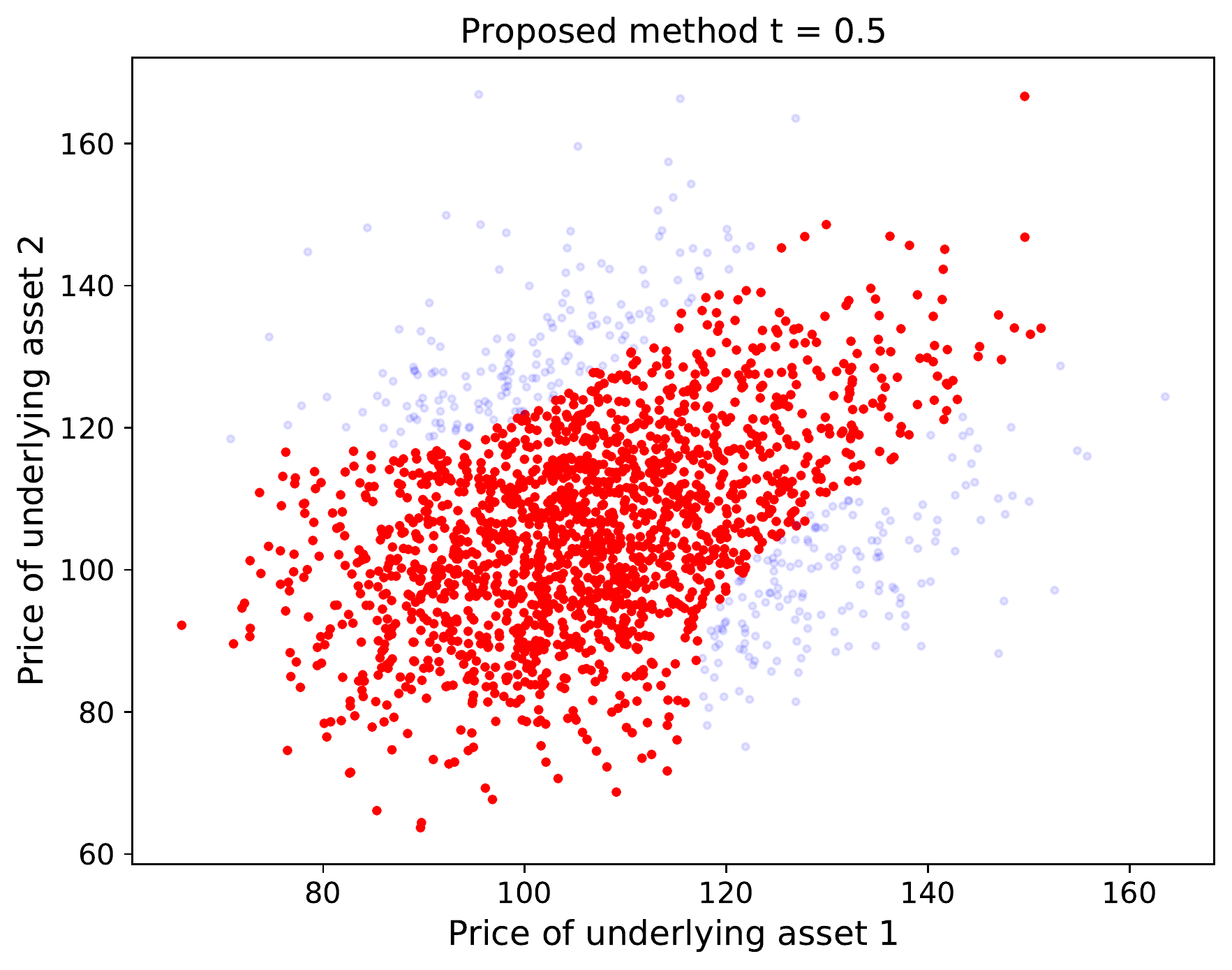}
        \caption{DRL method: $t=0.5$}
    \end{subfigure}
%     \vfill
%     \begin{subfigure}{0.48\textwidth}
%         \centering
%         \includegraphics[scale=0.4]{}
%         \caption{Proposed method: $t=0.75$}
%     \end{subfigure}
%     \begin{subfigure}{0.48\textwidth}
%         \centering
% \includegraphics[scale=0.4]{}
%         \caption{DRL method: $t=0.75$}
%     \end{subfigure}
    \caption{2-dimensional max call options:  Comparison of exercise boundaries between the proposed RNN
    approach (top left and bottom left) and the method of DRL method (top right and bottom right) at $t=0.5$ (top), $t=0.75$ (bottom) and $s_0=110$. All blue points: sample points that
    should be exercised; all red points: sample points that should be continued; bold dark blue points: sample points that should be exercised
    but are misclassified as continued; bold dark red points: sample points that should be continued but are misclassified as exercised.}
    \label{fig:max_exc}
\end{figure}
\vskip 2mm
\noindent\textit{Delta hedging of American style max call option}\vskip 2mm
In addition, we compute the relative P\&Ls of the delta hedging portfolio by the finite difference method and compare them with the values computed by our approach. Table \ref{tab:max_f1} shows the F1 score of the two methods. The F1 score of our method ranges from $0.8011 - 0.9564$. The F1 score of the DRL method ranges from $0.6667 - 0.7965$.
\begin{table}[htbp]
    \begin{center}
    \caption{2-dimensional Max call option: Means and standard deviations of the relative P\&Ls by finite difference versus the proposed method, subject to 100 hedging intervals}
    {
        \begin{tabular}{ccc|cc}
        \multicolumn{5}{c}{ (a) 100 hedging intervals}\\
        & \multicolumn{2}{c|}{Proposed method} & \multicolumn{2}{c}{Finite difference method} \\ \hline \hline
        $s_0$ & Mean & Std & Mean & Std \\ \hline
        90 & -0.0018 & 0.0899 & 0.022 & 0.1932 \\
        100 & -0.0022 & 0.0434 & 0.0016 & 0.0990 \\
        110 & -0.0023 & 0.0072 & 0.0016 & 0.0614 \\ \hline
        \end{tabular}
    \vfill
        \begin{tabular}{ccc|cc}
        \multicolumn{5}{c}{(b) 500 hedging intervals}\\
        & \multicolumn{2}{c|}{Proposed method} & \multicolumn{2}{c}{Finite difference method} \\ \hline \hline
        $s_0$ & Mean & Std & Mean & Std  \\ \hline
        90 & -0.0018 & 0.0011 & 0.022 & 0.1932 \\
        100 & -0.0028 & 0.0017 & 0.0016 & 0.0990 \\
        110 & -0.0029 & 0.0021 & 0.0016 & 0.0614 \\ \hline
        \end{tabular}
    \label{tab:max_portfolio}
    }
    \end{center}
\end{table}
\begin{figure}[!htbp]
    \centering
    \begin{subfigure}{0.48\textwidth}
        \centering
        \includegraphics[scale=0.4]{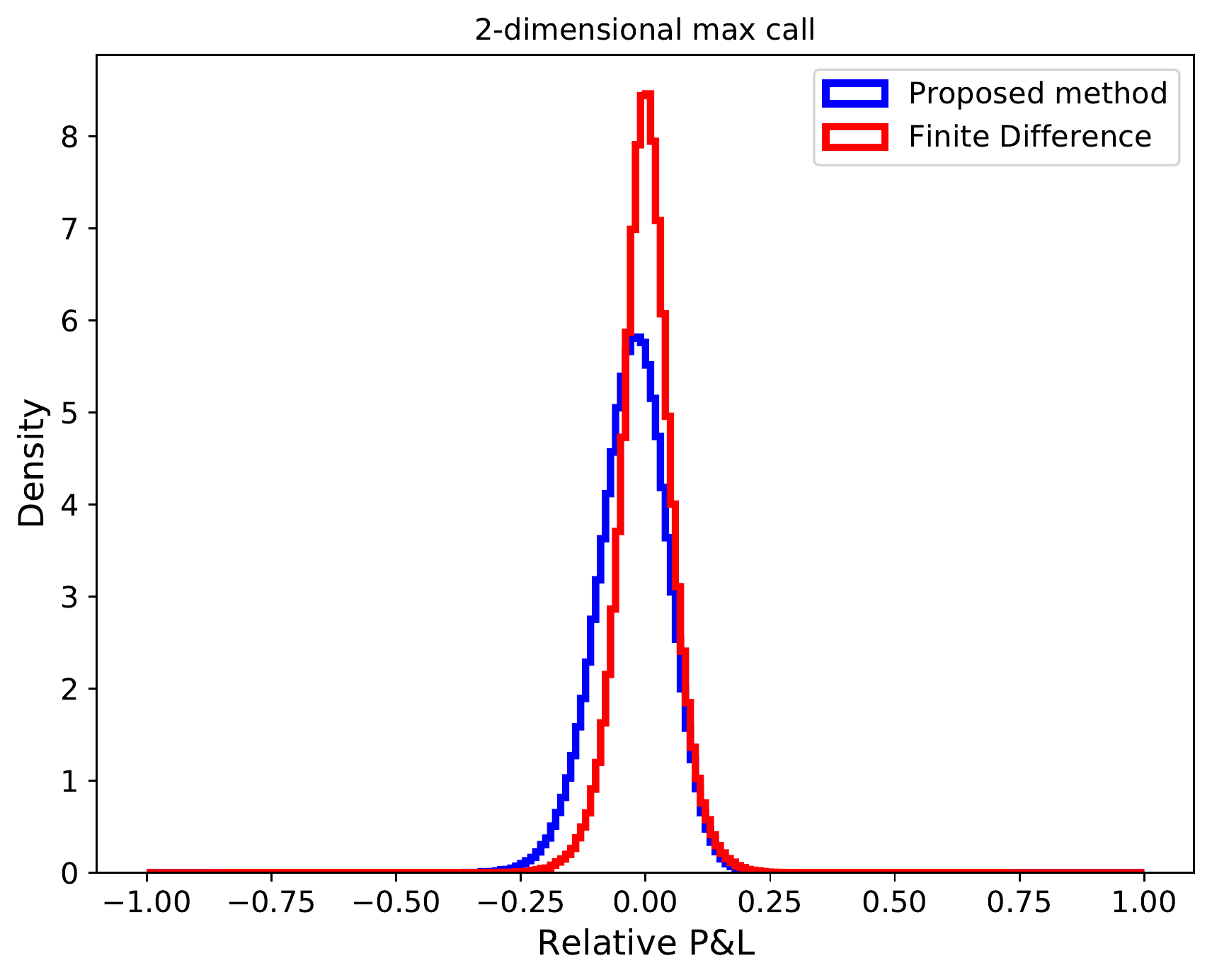}
        \caption{100 hedging intervals}
    \end{subfigure}
    \begin{subfigure}{0.48\textwidth}
        \centering
        \includegraphics[scale=0.4]{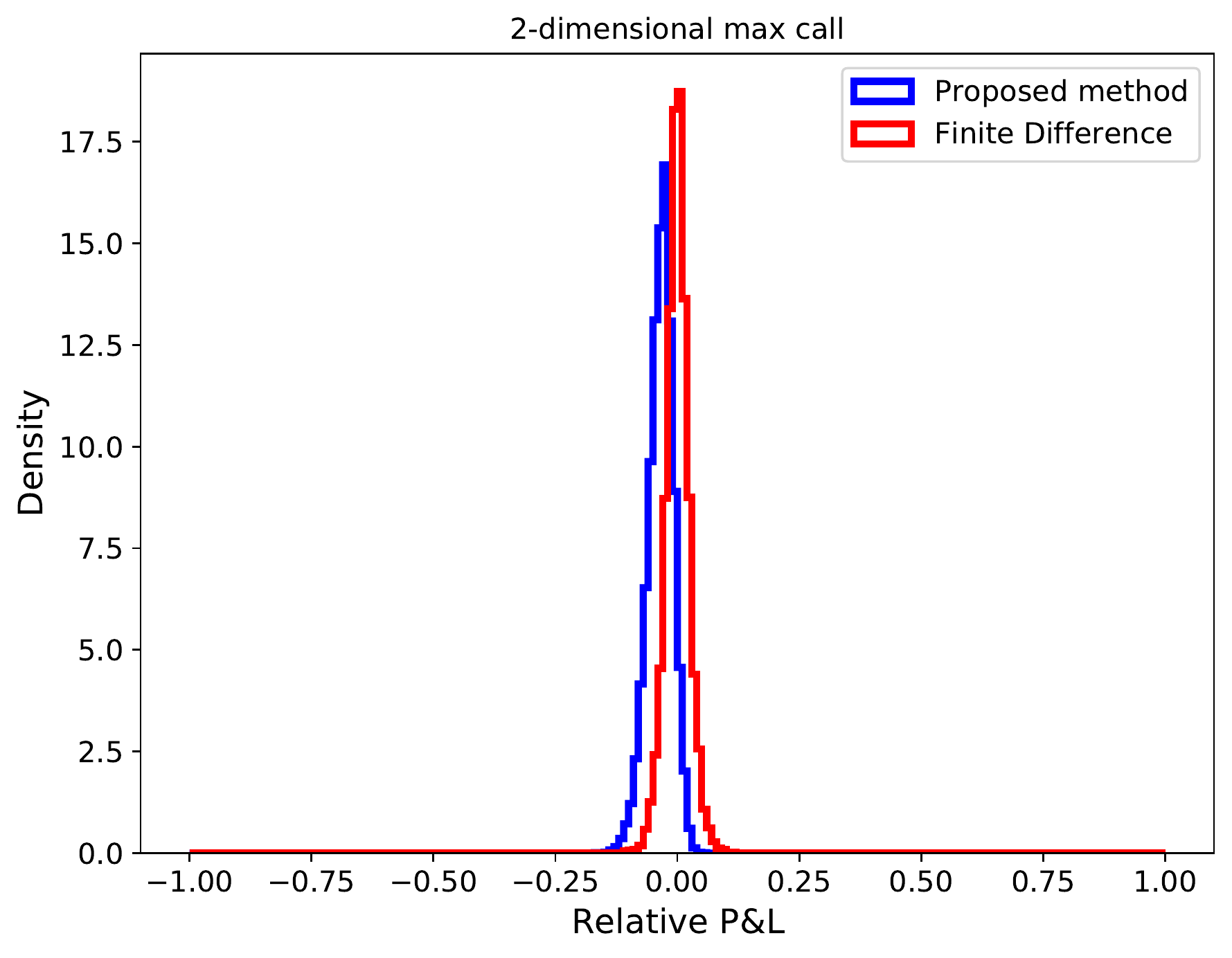}
        \caption{500 hedging intervals}
    \end{subfigure}
    \caption{2-dimensional Max call option: Comparison of the distributions of the relative P\&Ls computed by the proposed RNN approach (blue) versus by finite difference method (red) at $s_0=1.1K$.}
    \label{fig:max_portfolio}
\end{figure}

Table \ref{tab:max_portfolio} and Figure \ref{fig:max_portfolio} show the means, standard deviations and the distributions of the relative P\&Ls computed by the proposed approach versus by finite difference methods. The variation computed by our proposed approach decreases as more timesteps are used in the training of the model, however, we can observe that the proposed method produces a skewed normal P\&L distribution. The right skewness of the resulting P\&L distribution shows that the replicating portfolio of our methodology is biased to over estimating the risk of the portfolio.
\vskip 2mm
\noindent\textit{Far OTM and ITM performance}\vskip 2mm
Next we looked at testing the performance of our proposed method in the case where the max call option was far ITM, $s_0=80$, and far OTM, $s_0=120$. We compared this to the DRL method and the Longstaff-Schwartz method and summarize the results in Table \ref{tab:2-d_max-far}.
\begin{table}[!htbp]
    \centering
    \caption{F1-score calculated using \eqref{eq:f1} for 2-dimensional American style max call option under time constraint.}
    \begin{tabular}{ccc|cc}
        & \multicolumn{2}{c|}{Proposed method} & \multicolumn{2}{c}{DRL method} \\ \hline \hline
        $s_0$ & $t=0.5$ & $t=0.75$ & $t=0.5$ & $t=0.75$ \\ \hline
        90 & 0.9564 & 0.9405 & 0.7187 & 0.7163 \\
        100 & 0.8983 & 0.8780 & 0.7965 & 0.7471 \\
        110 & 0.8362 & 0.8011 & 0.6667 & 0.6667 \\ \hline
    \end{tabular}
    \label{tab:max_f1}
\end{table}

Using our proposed method, the percent errors of the computed prices at $t = 0$ are less than $2.7671\%$ (Table \ref{tab:2-d_max}), the percent errors of the computed deltas at $t = 0$ are less than $3.8461\%$ (Table \ref{tab:2-d_max-far}). For the same runtime, our method produces smaller errors than the DRL method, which had errors $13.5132\%$ and $43.3926\%$ for prices and $44.32096\%$ and $6.3112\%$ for deltas. However this is not the case for the Longstaff-Schwartz method. We see that for small dimensions the price approximated by the Longstaff-Schwartz method has errors of $1.6020\%$ and $0.6170\%$, which outperforms both our proposed method and the DRL method under time constraint. However, our proposed method still outperforms the Longstaff-Schwartz method when it comes to computing deltas. The Longstaff-Schwartz method had relative errors of $8.6042\%$ and $3.1955\%$.
\begin{table}[!htbp]
    \centering
    \caption{2-dimensional Max call: Far OTM and ITM computed option prices and deltas at $t=0$.}
    \begin{tabular}{ccccccc}
        & \multicolumn{6}{c}{Finite Difference method} \\ \hline \hline
        $s_0$ & Price & Std & \% Error & Delta &  \% Error & Run Time (s) \\ \hline
        80 & 1.3046 & -- & -- & [8.9656, 8.9176] & -- & --\\
        120 & 26.5793 & -- & -- & [47.9348, 47.7876] & -- & --\\ \hline \\
        & \multicolumn{6}{c}{Proposed method} \\ \hline \hline
        $s_0$ & Price & Std & \% Error & Delta &  \% Error & Run Time (s) \\ \hline
        80 & 1.3407 & 0.00163 & 2.7671 & [9.3101, 9,2609] & 3.8461 & 151.82\\
        120 & 26.826 & 0.00167 &  0.9281 & [49.3198, 49.1329] & 2.8523 & 153.67\\ \hline \\
        & \multicolumn{6}{c}{DRL method} \\ \hline \hline
        $s_0$ & Price & Std & \% Error & Delta &  \% Error & Run Time (s) \\ \hline
        80 & 0.7385 & 0.3891 & 43.3926 & [12.9289, 12.8604] & 44.2096 & 151.78\\
        120 & 22.9876 & 0.3742 & 13.5132 & [50.8435, 50.9202] & 6.3112 & 153.56\\ \hline \\
        & \multicolumn{6}{c}{Lonstaff-Schwartz} \\ \hline \hline
        $s_0$ & Price & Std & \% Error & Delta &  \% Error & Run Time (s) \\ \hline 
        80 & 1.2837 & 0.01369 & 1.6020  & [9.5721, 9.8498] & 8.60416 & 151.14\\
        120 & 26.4153 & 0.01334 &  0.6170 & [49.0682, 49.713] & 3.19549 & 153.13\\ \hline
    \end{tabular}
    \label{tab:2-d_max-far}
\end{table}

For pricing under limited runtime, Longstaff-Schwartz method performs better than both neural network approaches when $d=2$, however, we see that the delta performance is not as good as our proposed method.
\vskip 2mm
\noindent\textit{Price comparison with unsupervised method}\vskip 2mm
We also compare the output of the proposed method to the unsupervised method using the L-BFGS optimizer and the loss function outlined in \cite{salvador-2020}. We implemented the unsupervised method as in \cite{salvador-2020}, where an $101\times 101 \times 75$ spacetime grid was used to run the unsupervised method. We let the method run until the running time is close to the times given in Table \ref{tab:2-d_max} and we found the prices for initial price of $s_0 = 90,100,110$; the resulting value output by the unsupervised method, $V_{unsupervised} = 25.72,25.00,24.59$, respectively.

The pricing errors are computed based on FDM prices in Table \ref{tab:2-d_max} for each $s_0$. The relative error of the unsupervised method for each $s_0$ are $[513.481\%, 167.160\%, 48.527\%]$, respectively. We found that due to the limited training time the unsupervised method could not be trained until convergence. This led to drastically inaccurate prices. The unsupervised method is also not designed to compute deltas. The unsupervised method can be classified as a deep Galerkin type method \cite{sirignano-2018} which learns the pricing function across the given grid domain. The deep Galerkin type method has been compared to BSDE methods in \cite{chen-2019} and it was found that the delta errors computed from the autodifferentiation of the neural network pricing functions over the grid is relatively high compared to BSDE and HJB methods.

\subsubsection*{Experiment 4: 5-dimensional American style max call option} \label{ex:4}

We note that unlike Experiment $3$, here the exact solutions are not available. Therefore, in this section we run our proposed method until it reaches a level of accuracy similar to the results produced by the DRL method. We do not compare with the Longstaff-Schwartz method as this has been done in \cite{chen-2019} for this experiment. The DRL method for this experiment was run with $J=2$, with sample size $720000$, for $N=100$ timesteps.

Table \ref{tab:5-d_max} reports the runtime, option prices and deltas at $t = 0$ computed by the proposed method and the DRL method. This experiment shows that for our method to achieve prices that have an absolute difference of order $10^{-2}$ to the price results from the DRL method following the setup in \cite{chen-2019}. For $s_0=90,100,110$ the DRL method requires $10026.46$ seconds, $10592.07$ seconds and $10171.40$ respectively. Our method requires $939.86$ seconds, $953.41$ seconds and $944.85$ seconds respectively. Our method shows about a $10$ times faster run time to achieve similar accuracies. 

\begin{table}[!htbp]
    \centering
    \caption{5-dimensional max call option: computed prices and deltas at $t=0$}
    \begin{tabular}{ccccc}
        & \multicolumn{4}{c}{Proposed method} \\
        \hline \hline
        $s_0$ & Price & Std & Delta & Run Time (s) \\ \hline
        90 & 16.9152 & 0.000851 &[18.196,18.060,18.142,18.173,18.206] & 939.86\\
        100 & 26.5210 & 0.000999 &[19.100,19.117,19.087,19.511,19.072] & 953.41\\
        110 & 37.1389 & 0.000813 &[20.475,20.646,20.678,20.443,20.601] & 944.85 \\ \hline \\
        & \multicolumn{4}{c}{DRL method} \\
        \hline \hline
        $s_0$ & Price & Std &Delta & Run Time (s) \\ \hline
        90 & 16.8896 & 0.000709 &[17.280, 17.320, 17.540, 17.380, 17.470] & 10026.46\\
        100 & 26.4876 & 0.000728 &[20.170, 20.040, 19.980, 20.710, 20.410] & 10592.07\\
        110 & 37.0996 & 0.000721 &[21.570, 21.980, 21.900, 21.490, 22.020] & 10171.40 \\ \hline
    \end{tabular}
    \label{tab:5-d_max}
\end{table}

\subsubsection*{Experiment 5: 30-dimensional American style max call option} \label{ex:5}
We extend Experiment $4$ to 30-dimensions, with a simulation size of $270000$. We report the runtime, mean option price, option price standard deviation, mean delta and delta standard deviation that the methods took to achieve prices within the order of $10^{-2}$ absolute difference from the DRL method. In this example we also make comparison with the Longstaff-Schwartz method, in addition we also report standard deviation for price and delta, and report mean price and delta. The experimental results are summarized in Table \ref{tab:30-d_max}.
\begin{table}[!!htbp]
    \centering
    \caption{30-dimensional max call option: computed prices and deltas at $t=0$}
    \begin{tabular}{cccccc}
        & \multicolumn{5}{c}{Proposed method} \\
        \hline \hline
        $s_0$ & Price & Std & Delta & Std & Run Time (s) \\ \hline
        90 & 41.6933 & 0.00175 & 4.9568 & 0.0230 & 1384.35\\
        100 & 57.382 & 0.00195 & 4.9354 & 0.021 & 1388.06\\
        110 & 72.3044 & 0.00146 & 4.9315 & 0.0232 & 1389.65 \\ \hline \\
        & \multicolumn{5}{c}{DRL method} \\
        \hline \hline
        $s_0$ & Price & Std & Delta & Std & Run Time (s) \\ \hline
        90 & 41.7125 & 0.00152 & 5.0139 & 0.0117 & 15763.37\\
        100 & 57.4205 & 0.00148 & 5.0121 & 0.0108 & 15775.14\\
        110 & 72.3504 & 0.00186 & 5.0133 & 0.0137 & 15743.56 \\ \hline \\
        & \multicolumn{5}{c}{Longstaff-Schwartz} \\
        \hline \hline
        $s_0$ & Price & Std & Delta & Std & Run Time (s) \\ \hline
        90 & 41.6844 & 0.00259 & 4.8842 & 0.03386 & 47034.76\\
        100 & 57.3705 & 0.00283 & 4.8221 & 0.03459  & 47094.21\\
        110 & 72.5003 & 0.00309 & 4.9585 & 0.03614 & 47099.81 \\ \hline
    \end{tabular}
    \label{tab:30-d_max}
\end{table}

Table \ref{tab:30-d_max} reports the runtime, option prices and deltas at $t = 0$ computed by the proposed method, the DRL method and the Longstaff-Schwartz method. This experiment shows that for our method to achieve prices that have an absolute difference of order $10^{-2}$ to the price results from the DRL method following the setup in \cite{chen-2019}. For $s_0=90,100,110$ the DRL method requires $15763.37$ seconds, $15783.14$ seconds and $15743.56$ seconds respectively. Our method required $1384.35$ seconds, $1388.06$ seconds and $1389.65$ seconds respectively. The Longstaff-Schwartz method required $47034.76$ seconds, $47094.21$ seconds and $47099.81$ seconds respectively. Our method shows about a $10$ times faster run time to achieve similar accuracy over the DRL method and over $35$ times faster than the Longstaff-Schwartz method. We can also see that the delta of the option using the pathwise derivative for the Longstaff-Schwartz method has a larger variation over the deltas found using the DRL method and the Proposed method.

\section{Conclusion}\label{sec:Conclusion}

We propose a deep Recurrent network framework for pricing and hedging high-dimensional American option. Our proposed method achieves a better runtime efficiency than the DRL method. We solve the BSDE using a neural network parameterization, however, we use the loss function presented in \cite{hure-2020}, which has theoretical convergence results. The theoretical time complexity of our proposed method is both linear in the number of assets $d$ and the number of timesteps $N$ which is better than the quadratic results of \cite{chen-2019}. However, in our experimentation we see that this is only the case for $d > 20$ and at $N = 500$. This is largely attributed to the timeskip step implemented by \cite{chen-2019}, which effectively makes the time complexity of the DRL method slightly better than quadratic. Interestingly we found that as $N$ decreases there is an constant increase of the quadratic complexity term. Our Algorithms \ref{alg:training} and \ref{alg:eval} yields prices and deltas at all points in space and time and it is more accurate than the DRL method under time constraints. The main drawback of our method is that, when there are no time constraints, our method is not as accurate. This can be explained by the accuracy difference between different loss functions used to solve \eqref{eq:BSDE} as shown in \cite{hure-2020}. A potential future work is to improve the accuracy of the method by improving Recurrent network architecture, as well as extension to transformer networks. Another avenue of work can also explore the use of neural network sample generators to generate better sample data for training.

\bibliographystyle{unsrtnat}
\bibliography{References}

\begin{thebibliography}{34}
\providecommand{\natexlab}[1]{#1}
\providecommand{\url}[1]{\texttt{#1}}
\expandafter\ifx\csname urlstyle\endcsname\relax
  \providecommand{\doi}[1]{doi: #1}\else
  \providecommand{\doi}{doi: \begingroup \urlstyle{rm}\Url}\fi

\bibitem[Hull(2003)]{hull-2003}
John Hull.
\newblock Options, futures and other derivatives.
\newblock Upper Saddle River, NJ, 2003. Prentice Hall.

\bibitem[Forsyth et~al.(2002)Forsyth, Y., and Vetzal]{forsyth-vetzal-2002}
P.~Forsyth, Li~Y., and K.~R. Vetzal.
\newblock Quadratic convergence for valuing {American} options using a penalty
  method.
\newblock \emph{Siam J. Sci. Comput.}, 23:\penalty0 2095--2122, 2002.

\bibitem[Achdou and Pironneau(2005)]{pironneau-2005}
Y.~Achdou and O.~Pironneau.
\newblock Computational methods for option pricing.
\newblock \emph{Frontiers in Applied Mathematics}, 30, 2005.

\bibitem[Duffy(2006)]{duffy-2006}
Daniel~J. Duffy.
\newblock Finite difference methods in financial engineering.
\newblock Chichester, 2006. Wiley.

\bibitem[Forsyth et~al.(1999)Forsyth, Y., and Vetzal]{vanroy-1999}
P.~Forsyth, Li~Y., and K.~R. Vetzal.
\newblock Optimal stopping of markov processes: Hilbert space theory,
  approximation algorithms and an application to pricing high-dimensional
  financial derivatives.
\newblock \emph{IEEE Trans. Automat. Control}, 44:\penalty0 1840--1851, 1999.

\bibitem[Longstaff and Schwartz(2001)]{longstaff-2001}
F.A. Longstaff and E.S. Schwartz.
\newblock Valuing {American options} by simulation: A simple least-squares
  approach.
\newblock \emph{Rev. Financ. Stud.}, 14:\penalty0 113--147, 2001.

\bibitem[Kholer et~al.(2001)Kholer, Kryzyzak, and N.]{kohler-2010}
M.~Kholer, A.~Kryzyzak, and Todorovic N.
\newblock Pricing of high-dimensional {American} options by neural networks.
\newblock \emph{Math. Finance}, 20:\penalty0 383--410, 2001.

\bibitem[Broadie and Glasserman(2004)]{broadie-2004}
Mark Broadie and Paul Glasserman.
\newblock A stochastic mesh method for pricing high-dimensional {American}
  options.
\newblock \emph{J. Comput. Finance}, 7:\penalty0 35--72, 2004.

\bibitem[Bouchard and Warin(2012)]{bouchard-2012}
B.~Bouchard and X.~Warin.
\newblock Monte-carlo valuation of american options: Facts and new algorithms
  to improve existing methods.
\newblock \emph{Numerical Methods in Fiannce}, 12:\penalty0 212--255, 2012.

\bibitem[Chen and Wan(2021)]{chen-2019}
Yangang Chen and Justin W.~L. Wan.
\newblock Deep neural network framework based on backward stochastic
  differential equations for pricing and hedging american options in high
  dimensions.
\newblock \emph{Quantitative Finance}, 21\penalty0 (1):\penalty0 45--67, 2021.
\newblock \doi{10.1080/14697688.2020.1788219}.
\newblock URL \url{https://doi.org/10.1080/14697688.2020.1788219}.

\bibitem[Haugh and Kogan(2004)]{haugh-2004}
M.B. Haugh and L.~Kogan.
\newblock Pricing {American} options: A duality approach.
\newblock \emph{Oper. Res.}, 52:\penalty0 258--270, 2004.

\bibitem[E et~al.(2017)E, Han, and Jentzen]{e-2017}
Weinan E, Jiequn Han, and Arnulf Jentzen.
\newblock Deep learning-based numerical methods for high-dimensional parabolic
  partial differential equations and backward stochastic differential
  equations.
\newblock \emph{Communications in Mathematics and Statistics}, 5:\penalty0
  349–380, Nov 2017.

\bibitem[Beck et~al.(2018)Beck, Becker, Grohs, Jaafari, and Jentzen]{beck-2018}
Christian Beck, Sebastian Becker, Philipp Grohs, Nor Jaafari, and Arnulf
  Jentzen.
\newblock Solving stochastic differential equations and {Kolmogorov} equations
  by means of deep learning, 2018.

\bibitem[Han and E(2016)]{han-2016}
Jiequn Han and Weinan E.
\newblock Deep learning approximation for stochastic control problems, 2016.

\bibitem[Fujii et~al.(2017)Fujii, Takahashi, and M.]{fujii-2017}
M.~Fujii, A.~Takahashi, and Takahashi M.
\newblock Asymptotic expansion as prior knowledge in deep learning method for
  high dimensional {BSDEs}, 2017.

\bibitem[Guler and Laignelet(2019)]{guler-2019}
Batuhan Guler and Panos~Parpas Laignelet.
\newblock Towards robust and stable deep learning algorithms for forward
  backward stochastic differential equations, 2019.

\bibitem[Sirignano and Spiliopoulos(2018)]{sirignano-2018}
Justin Sirignano and Konstantinos Spiliopoulos.
\newblock {DGM}: A deep learning algorithm for solving partial differential
  equations.
\newblock \emph{Journal of Computational Physics}, 375:\penalty0 1339–1364,
  Dec 2018.
\newblock \doi{10.1016/j.jcp.2018.08.029}.

\bibitem[Salvador et~al.(2019)Salvador, Oosterlee, and van~der
  Meer]{salvador-2020}
Beatriz Salvador, Cornelis Oosterlee, and Remco van~der Meer.
\newblock Financial option valuation by unsupervised learning with artificial
  neural networks, 2019.

\bibitem[Herrera et~al.(2021)Herrera, Krach, Ruyssen, and
  Teichmann]{herrera-2021}
C.~Herrera, F.~Krach, P.~Ruyssen, and J.~Teichmann.
\newblock Optimal stopping via randomized neural networks, 2021.

\bibitem[Hure et~al.(2020)Hure, Pham, and Warin]{hure-2020}
Come Hure, Huyen Pham, and Xavier Warin.
\newblock Deep backward schemes for high-dimensional nonlinear {PDE}s.
\newblock \emph{Mathematics of Computation}, 2020.

\bibitem[Sch{\"a}fer and Zimmermann(2006)]{schafer-2007}
Anton~Maximilian Sch{\"a}fer and Hans~Georg Zimmermann.
\newblock Recurrent neural networks are universal approximators.
\newblock In Stefanos~D. Kollias, Andreas Stafylopatis, W{\l}odzis{\l}aw Duch,
  and Erkki Oja, editors, \emph{Artificial Neural Networks -- ICANN 2006},
  pages 632--640, Berlin, Heidelberg, 2006. Springer Berlin Heidelberg.
\newblock ISBN 978-3-540-38627-8.

\bibitem[Broadie and Glasserman(1996)]{broadie-1996}
Mark Broadie and Paul Glasserman.
\newblock Estimating security price derivatives using simulation.
\newblock \emph{Management Science}, 42\penalty0 (2):\penalty0 269--285, 1996.

\bibitem[Thom()]{thom-2009}
H.~Thom.
\newblock \emph{(Longstaff Schwartz Pricing of Bermudan Options and their
  Greeks}.
\newblock PhD thesis.

\bibitem[Elman(1990)]{elman-1990}
Jeffrey~L. Elman.
\newblock Finding structure in time.
\newblock \emph{Cognitive Science}, 1990.

\bibitem[Cho et~al.(2014)Cho, van Merrienboer, Gulcehre, Bahdanau, Bougares,
  Schwenk, and Bengio]{cho-2014}
Kyunghyun Cho, Bart van Merrienboer, Caglar Gulcehre, Dzmitry Bahdanau, Fethi
  Bougares, Holger Schwenk, and Yoshua Bengio.
\newblock Learning phrase representations using {RNN} encoder-decoder for
  statistical machine translation.
\newblock In \emph{Proceedings of the 2014 Conference on Empirical Methods in
  Natural Language Processing ({EMNLP})}, pages 1724--1734. Association for
  Computational Linguistics, Oct 2014.
\newblock \doi{10.3115/v1/D14-1179}.

\bibitem[Hochreiter and Schmidhuber(1997)]{hochreiter-1997}
Sepp Hochreiter and Jurgen Schmidhuber.
\newblock Long short-term memory.
\newblock \emph{Neural Comput.}, 9\penalty0 (8):\penalty0 1735–1780, Nov
  1997.
\newblock \doi{10.1162/neco.1997.9.8.1735}.

\bibitem[Goodfellow et~al.(2016)Goodfellow, Bengio, and
  Courville]{goodfellow-2016}
I.~Goodfellow, Y.~Bengio, and A.~Courville.
\newblock Deep learning, adaptive computation and machine learning.
\newblock 2016.

\bibitem[Jordan(1990)]{jordan-1990}
Michael~I. Jordan.
\newblock Attractor dynamics and parallelism in a connectionist sequential
  machine.
\newblock \emph{Artificial Neural Netowrks: Concept Learning}, page 112–127,
  1990.

\bibitem[He et~al.(2016)He, Zhang, Ren, and Sun]{he-2016}
Kaiming He, Xiangyu Zhang, Shaoqing Ren, and Jian Sun.
\newblock Identity mappings in deep residual networks.
\newblock In Bastian Leibe, Jiri Matas, Nicu Sebe, and Max Welling, editors,
  \emph{Computer Vision -- ECCV 2016}, pages 630--645, Cham, 2016. Springer
  International Publishing.
\newblock ISBN 978-3-319-46493-0.

\bibitem[Ramachandran et~al.(2017)Ramachandran, Zoph, and Le]{rama-2017}
P.~Ramachandran, B.~Zoph, and Q.V. Le.
\newblock Swish: A self-gate activation function, 2017.

\bibitem[Kingma and Ba(2017)]{kingma-2014}
D.P. Kingma and J.~Ba.
\newblock Adam: A method for stochastic optimization, 2017.

\bibitem[Glasserman and Yu(2004)]{glasserman-2004}
Paul Glasserman and Bin Yu.
\newblock Simulation for {American} options: Regression now or regression
  later?
\newblock In \emph{Monte Carlo and Quasi-Monte Carlo Methods 2002}, pages
  213--226, Berlin, Heidelberg, 2004. Springer.

\bibitem[He et~al.(2007)He, Kennedy, Forsyth, Y., and Vetzal]{he-2006}
C.~He, J.~S. Kennedy, P.~Forsyth, Li~Y., and K.~R. Vetzal.
\newblock Calibration and hedging under jump diffusion.
\newblock \emph{Review of Derivatives Research}, 9:\penalty0 1--35, Jan 2007.

\bibitem[Firth(2005)]{firth-2005}
Neil~P. Firth.
\newblock \emph{High Dimensional {American} Options}.
\newblock PhD thesis, University of Oxford, 2005.

\end{thebibliography}

\end{document}